\documentclass[aps,prl,10pt,twocolumn,superscriptaddress,notitlepage,showpacs,floatfix,longbibliography]{revtex4-2}

\usepackage{amsmath,amssymb,amsfonts}
\usepackage{graphicx}
\usepackage{dcolumn}
\usepackage{bm}
\usepackage{hyperref}
\usepackage{color}
\usepackage{physics}
\usepackage{mathtools}
\usepackage{xcolor}
\usepackage{amsthm}
\usepackage{comment}

\hypersetup{
    colorlinks=true,       
    linkcolor=red,          
   citecolor=magenta,        
    filecolor=magenta,      
    urlcolor=cyan,           
    runcolor=cyan
}

\newtheorem{definition}{Definition}
\newtheorem{lemma}{Lemma}
\newtheorem{proposition}{Proposition}

\newtheorem{result}{Result}

\newcommand{\Cg}{C_g}
\newcommand{\KM}{\ell_{\mathcal{M}}}

\newcommand{\KCg}{\ell_{C_g}}

\newcommand{\NM}{n_{\mathcal{M}}}

\newcommand{\NCg}{n_{C_g}}
\newcommand{\BN}{\mathcal{B}_N}

\renewcommand{\Pr}{\mathrm{Pr}}
\newcommand{\argmin}{\mathrm{arg\,min}}
\renewcommand{\Tr}{\mathrm{Tr}}

\begin{document}

\title{Minimal Subsystems for Characterizing Genuine Multipartite Entanglement
}

\author{Shashank Shekhar}
\email{Contact author: shashankshekhar01234@gmail.com}
\affiliation{Department of Physics, Indian Institute of Technology Bombay, Mumbai, Maharashtra 400076, India}
\affiliation{Networked Quantum Devices Unit, Okinawa Institute of Science and Technology Graduate University, Okinawa, Japan}

\author{Siddhartha Santra}
\email{Contact author: santra@iitb.ac.in}
\affiliation{Department of Physics, Indian Institute of Technology Bombay, Mumbai, Maharashtra 400076, India}
\affiliation{Center of Excellence in Quantum
Information, Computation, Science and Technology,
Indian Institute of Technology Bombay, Mumbai, Maharashtra 400076, India}

\begin{abstract}

Genuine multipartite entanglement (GME) is a valuable resource for quantum technologies, but quantifying it in many-body systems can be demanding since measures defined through optimizations over bipartitions are often assumed to require large joint measurements and many measurement settings. We introduce the concepts of GME evaluation length and GME evaluation number, which capture, respectively, the largest subsystem that must be jointly measured and the number of distinct subsystems that need to be measured to evaluate multipartite entanglement measures defined via bipartition optimizations. Taking the genuine multipartite concurrence (GMC) as a canonical example, we show that, for $N$-qubit Haar-random pure states, the GMC can be evaluated using only $N$ single-qubit purity measurements. Thus, for typical many-body states, evaluating the GMC requires considering only the $N$ single-site bipartitions, rather than the exponentially many distinct bipartitions. We further corroborate this behavior numerically for physically accessible states generated by local random quantum circuits. Our results show that the practical cost of quantifying GME can be dramatically lower than suggested by its formal definition, establishing measurement complexity as a complementary operational lens for multipartite entanglement.

\end{abstract}

\maketitle

\textit{Introduction--}Genuine multipartite entanglement (GME) characterizes correlations in a quantum
state that cannot be reproduced by any biseparable decomposition~\cite{horodecki2024multipartiteentanglement,walter2017multipartiteentanglement}.
Such correlations are fundamentally important in quantum many-body physics and underlie multipartite quantum information protocols including quantum communication ~\cite{Mark1999QSSintro, Epping2017NQKD, Hah2020AQCKAintro}, clock synchronization~\cite{Komar2014CSintro}, metrology \cite{GMEinQM1,GMEinQM2,GMEinQM3}, quantum computing \cite{GMEinQC1,GMEinQC2}.

Despite this centrality, the experimental evaluation of GME measures remains costly \cite{ent_detection,GME_eval_prob1,GME_eval_prob2}. Most measures are defined via an optimization over bipartitions: to evaluate them exactly, one must access reduced density matrices across all nontrivial bipartitions \cite{Ma2011,GGM,GMN,GMN_fermionic,GMN_fermionic2}. For an $N$-party system, the number of nontrivial bipartitions grows as $(2^{N-1}-1)$, and evaluating the GME generally requires joint measurements on subsystems containing up to $\lfloor N/2\rfloor$ qubits. Consequently, the experimental cost arises from two sources: the need to perform increasingly large joint measurements as the subsystem size grows, and the exponential growth in the number of distinct bipartitions that must be examined.

The exponential scaling above reflects a worst-case interpretation of the optimization defining the measure. In practice, however, the extremizing bipartition need not be arbitrary: it may be constrained by the structure of the state or the dynamics generating it and thus in realistic applications, one typically has prior information about the preparation circuit, governing dynamics, symmetries, Hamiltonian, or state family. This motivates the following operational question: \emph{For a given family of states and a given GME measure, what is the largest subsystem on which a joint measurement must be performed, and how many distinct subsystem measurements are actually required to evaluate the measure?}
This broad operational question has motivated extensive prior work \cite{EDL_motivation1,motivation2,motivation3,motivation4,motivation5,motivation6,Li2026}. Recent works in classical shadows \cite{motivation2} and few-copy certification \cite{motivation4} have drastically reduced the sample complexity of entanglement verification by leveraging randomized local probes and prior state knowledge. A parallel thread of work has sought to characterize the global state with marginals~\cite{marg1,marg2,marg3,marg4,marg5}. However, a rigorous framework that directly dictates the minimum operational joint measurement size and number of subsystem-level measurements required to evaluate GME is still lacking.



\begin{figure}
    \centering
    \includegraphics[width=\linewidth]{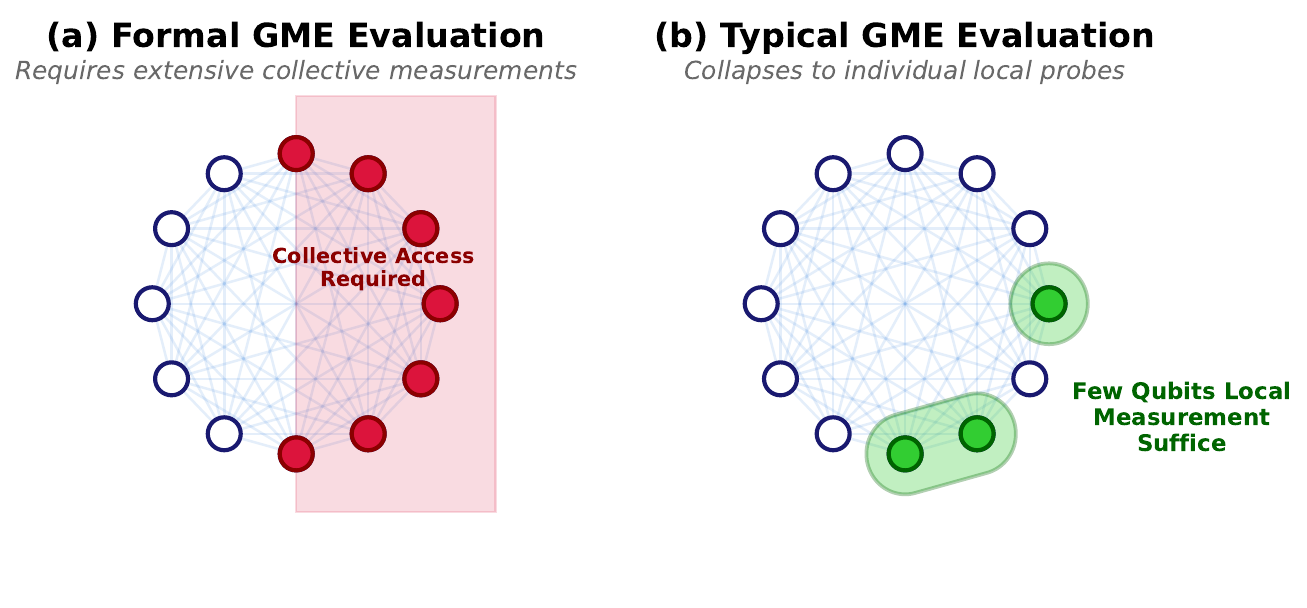}
    \caption{Collapse of measurement complexity for genuine multipartite entanglement. (a) \textbf{Formal evaluation:} Exact computation of bipartition-extremized measures such as the genuine multipartite concurrence (GMC) appears to require optimization over all $2^{N-1}-1$ non-trivial bipartitions, involving joint measurements on up to $\lfloor N/2 \rfloor$ qubits. (b) \textbf{Typical evaluation:} For typical highly entangled many-body states, the minimization collapses to the single-qubit sector, reducing the search to $N$ single-qubit measurements. The operational cost therefore scales linearly with $N$, rather than exponentially.}

    \label{fig:motivation}
\end{figure}

In this work, we address this by introducing the \emph{GME evaluation length} (GEL) $\KM(\mathcal F)$, which quantifies the size (in qubits) of the largest joint subsystem measurement needed to exactly evaluate a bipartition-extremized GME measure for every state in a family $\mathcal F$, and the \emph{GME evaluation number} (GEN) $\NM(\mathcal F)$, which counts the number of such distinct subsystems to be measured. For states drawn from a probabilistic ensemble $(\mathcal F,\mu)$, we define the corresponding $\varepsilon$-typical GME evaluation length $\ell_{\mathcal M}^{\varepsilon}(\mathcal F)$ and evaluation number $n_{\mathcal M}^{\varepsilon}(\mathcal F)$. We then establish several general properties of GEL, GEN, and $\varepsilon$-typical GEL and GEN, clarifying their operational meaning as measures of entanglement-evaluation complexity.

We instantiate this framework with the genuine multipartite concurrence (GMC) $C_g$~\cite{Ma2011}, defined as a minimization of bipartite concurrences over all bipartitions. We prove that, for $N$-qubit Haar-random pure states, $\ell_{C_g}^{\varepsilon}$ and $n_{C_g}^{\varepsilon}$ concentrate at $1$ and $N$, respectively, with probability $1-O(2^{-N})$. Thus, with overwhelming probability, the apparent exponential measurement complexity of GMC collapses to just $N$ possible single-site measurement settings for typical highly entangled states. We demonstrate numerically that this collapse also emerges dynamically in states generated by local random quantum circuits. Beyond quantifying measurement cost, GEL and GEN provide a natural framework for classifying multipartite states and ensembles by measurement complexity, while also inducing a partial ordering according to the resources required for evaluating GME.

\textit{Bipartition-extremized GME measures--}Many pure-state GME measures are obtained by evaluating a bipartite entanglement measure across every bipartition and extremizing over all cuts~\cite{Ma2011,GGM,GMentropy}, a paradigm that has also recently appeared in quantum spin liquids~\cite{GMN} and fermionic multipartite entanglement~\cite{GMN_fermionic}. This construction is natural because GME characterizes entanglement present across all bipartitions, making the extremal cut the relevant bottleneck. Moreover, the measure is determined by a single extremal bipartition rather than by an explicit aggregation over all cuts, potentially simplifying its evaluation.




Let $|\psi\rangle\in\bigotimes_{i=1}^N\mathcal H_i$ be a pure $N$-partite state, and let $\mathcal{B}_N$ denote the set of all nontrivial bipartitions, $\gamma=(A|B)$, of the $N$-parties. For each bipartition, define its size 
\(
|\gamma|:=\min(|A|,|B|),
\)
and let $\rho_\gamma$ denote the reduced density matrix of the smaller subsystem.

\begin{definition}[Bipartition-extremized measure]
A GME measure $\mathcal M$ is \emph{bipartition-extremized} if there exists a bipartite entanglement functional $f_\gamma$, depending only on $\rho_\gamma$, such that
\begin{equation}
  \mathcal{M}(|\psi\rangle) = \underset{\gamma \in \BN}{\mathrm{extr}}\;
  f_\gamma(|\psi\rangle),
  \label{eq:bpextreme}
\end{equation}
where $\operatorname{extr}$ denotes either a minimum or maximum, depending on the particular measure.
\end{definition}


The genuine multipartite concurrence (GMC)~\cite{Ma2011} provides a canonical example. For a bipartition $\gamma$ the bipartite concurrence,
\(
C_{A|B}(|\psi\rangle)=\sqrt{2\bigl(1-\Tr\rho_A^2\bigr)}
\),
with $\rho_A=\Tr_B|\psi\rangle\langle\psi|$, minimized over all bipartitions yields the GMC as,
\begin{equation}
\Cg(|\psi\rangle)=\min_{\gamma\in\BN} C_{A|B}(|\psi\rangle).
\label{eq:GMC}
\end{equation}
This measure is especially well suited to our purposes: it extends the familiar bipartite concurrence to the multipartite setting and depends only on subsystem purities, which are directly accessible from reduced measurements \cite{Purity_meas}. It is therefore a natural candidate for studying multipartite entanglement evaluation.



\textit{GME evaluation length and number---}While the formal definition of a GME measure implies an exponentially large measurement overhead, evaluating it on a family of pure states $\mathcal{F}$ of $N$-qubits with certain known properties, such as symmetries or generating dynamics, may require vastly fewer resources. To rigorously capture this operational reality, we introduce the GME evaluation length and evaluation number which formalize the measurement size in terms of the number of qubits, $|\gamma|$, and the number of bipartitions that must be sampled to evaluate a given bipartition-extremized GME measure $\mathcal{M}(\ket{\psi})$ for all $\ket{\psi}\in\mathcal{F}$.

\begin{definition}[GME Evaluation Length and Number]
\label{def:GELandGEN}
Let $\mathcal{M}(\ket{\psi})$ be a bipartition-extremized GME measure and let $\mathcal{F}$ be a family of $N$-qubit pure states. A subset of bipartitions $\mathcal{S}\subseteq\mathcal{B}_N$ is said to be \emph{$\mathcal{M}$-sufficient for $\mathcal{F}$} if,
\begin{equation}\label{eq:min_dem_set}
    \underset{\gamma\in\mathcal{S}}{\mathrm{extr}}\,
    f_\gamma(\ket{\psi})
    =
    \mathcal{M}(\ket{\psi}),
    \qquad
    \forall\,\ket{\psi}\in\mathcal{F}.
\end{equation}
\end{definition}

Among all $\mathcal{M}$-sufficient sets for $\mathcal{F}$, the
\emph{minimally demanding sufficient sets} are those that minimize the size of the largest bipartition they contain and, subject to that, their cardinality. These can be defined by,
\begin{equation}
\mathcal{S}^{*}(\mathcal{F})
=
\argmin_{\mathcal{S}\,:\,\mathcal{M}\text{-sufficient for }\mathcal{F}}
\left(
\max_{\gamma\in\mathcal{S}} |\gamma|,
\, |\mathcal{S}|
\right),
\label{eq:min_dem_suf_set}
\end{equation}
where the minimization is lexicographic, that is, $\max_{\gamma\in\mathcal{S}}|\gamma|$ is minimized first, and ties are broken by minimizing $|\mathcal{S}|$. The set $\mathcal{S}^{*}(\mathcal{F})$ comprises all $\mathcal{M}$-sufficient sets $\mathcal{S}_i(\mathcal{F})$, $i=1,2,\ldots$, that attain the optimum in Eq.~(\ref{eq:min_dem_suf_set}), i.e., all such sets have the same optimal values of $\max_{\gamma\in\mathcal{S}_i(\mathcal{F})}|\gamma|$ and
$|\mathcal{S}_i(\mathcal{F})|$.

The GEL and GEN for the family $\mathcal{F}$ can then be defined as the size of the largest bipartition in any $\mathcal{S}_i(\mathcal{F})\in\mathcal{S}^{*}(\mathcal{F})$ and its cardinality respectively,
\begin{align}
  \KM(\mathcal{F}) \;=\;
  \max_{\gamma \in \mathcal{S}_i(\mathcal{F})}\; |\gamma|, \qquad
\NM(\mathcal{F}) \;=\; |\mathcal{S}_i(\mathcal{F})|,
\label{eq:GELandGEN}
\end{align}
which, take positive integer values in the range, $1\leq \KM(\mathcal{F}) \leq \lfloor N/2\rfloor$ and 
$1\leq \NM(\mathcal{F})\leq \sum_{k=1}^{\KM(\mathcal{F})}\binom{N}{k}\leq (2^{N-1}-1)$. 
The minimum values, $\KM(\mathcal{F})=1$ and $\NM(\mathcal{F})=1$, imply that
only single-qubit measurements and a single bipartition, respectively, are
sufficient to evaluate the GME measure for every state in $\mathcal{F}$. This
is the case, for example, for family of $N$-qubit generalised GHZ state (Eq.~\ref{eq:gGHZ}). Conversely, $\KM(\mathcal{F})$ and $\NM(\mathcal{F})$
attain their maximum values when the full optimization over all bipartitions,
as in Eq.~(\ref{eq:bpextreme}), is required. Importantly, GEN does not count experimental repetitions, copies, or shots needed to estimate an individual subsystem quantity to a prescribed statistical precision. Therefore, the framework is also distinct from approaches that reduce the statistical sample complexity of entanglement estimation~\cite{motivation2,motivation3,motivation4}.

GEL and GEN can be defined even when $\mathcal M$ is not a bipartition-extremized quantity but instead admits a closed-form expression that can be reconstructed exactly from a fixed-size marginal $\rho_\gamma$ for some $\gamma$ of bounded size, by replacing the condition of Eq.~\eqref{eq:min_dem_set} with a re-constructibility condition $\mathcal M(\rho)=h_\gamma(\rho_\gamma)$. The formal definition, together with a worked example under this criterion, is provided in the supplemental material.

For a family of pure states, $\mathcal{F}$ equipped with a probability measure $\mu$, such as the Haar measure, $\varepsilon$-typical $\mathcal{M}$-sufficient sets can be defined by relaxing Eq. (\ref{eq:min_dem_set}) to require that the extremization restricted to the set obtains the GME measure with probability at least $(1-\varepsilon)$. This leads to corresponding $\varepsilon$-typical generalisations of the quantities defined in Eqs. (\ref{eq:min_dem_suf_set}) and (\ref{eq:GELandGEN}) as follows.

\begin{definition}[$\varepsilon$-Typical GEL and GEN]
\label{def:typical}
Let $\mathcal{F}$ denote an ensemble of $N$-qubit pure states distributed according to the probability measure $\mu$, and let $\mathcal{M}$ be a bipartition-extremized GME measure. A subset of bipartitions $\mathcal{S}\subseteq\mathcal{B}_N$ is called \emph{$\varepsilon$-typical $\mathcal{M}$-sufficient} for $\mathcal{F}$ if,
\begin{equation}
  \Pr_{|\psi\rangle\sim\mu}
  \left[
  \underset{\gamma\in\mathcal{S}}{\mathrm{extr}}\,
  f_\gamma(|\psi\rangle)
  =
  \mathcal{M}(|\psi\rangle)
  \right]
  \geq 1-\varepsilon.
  \label{eq:e_typ_set}
\end{equation}
\end{definition}

Then, the $\varepsilon$-typical minimally demanding sufficient sets, $\mathcal{S}^{*,\varepsilon}(\mathcal{F})$, comprise all $\varepsilon$-typical $\mathcal{M}$-sufficient sets $\mathcal{S}_i(\mathcal{F})$ that attain the lexicographic minimum in Eq.~(\ref{eq:min_dem_suf_set}). The corresponding
$\varepsilon$-typical GEL and GEN are given by,
\begin{align}
  \ell_{\mathcal{M}}^{\varepsilon}(\mathcal{F})
  &= \max_{\gamma\in\mathcal{S}_i(\mathcal{F})}|\gamma|,
  \qquad
  n_{\mathcal{M}}^{\varepsilon}(\mathcal{F})
  = |\mathcal{S}_i(\mathcal{F})|,
\end{align}
for any $\mathcal{S}_i(\mathcal{F})\in
\mathcal{S}^{*,\varepsilon}(\mathcal{F})$ which also provide upper bounds on the ensemble-averaged GEL and GEN. Specifically,
\begin{align}
\left\langle\ell_{\mathcal{M}}\right\rangle_{\mathcal{F}}
&\leq
(1-\varepsilon)\ell_{\mathcal{M}}^{\varepsilon}(\mathcal{F})
+\varepsilon\left\lfloor\frac{N}{2}\right\rfloor,
\nonumber\\
\left\langle n_{\mathcal{M}}\right\rangle_{\mathcal{F}}
&\leq
(1-\varepsilon)n_{\mathcal{M}}^{\varepsilon}(\mathcal{F})
+\varepsilon(2^{N-1}-1).
\end{align}
Thus, as $\varepsilon\to0$, the $\varepsilon$-typical GEL and GEN provide increasingly tight upper bounds on the expected measurement cost for states drawn from the ensemble. This ensemble-level perspective connects the notions of GEL and GEN to the statistical properties of many-body entanglement, which we investigate below for Haar-random states and states generated by local random circuits after describing their general properties.

\textit{Properties of the GEL and GEN--}The GEL and GEN, and the corresponding $\varepsilon$-typical quantities $\ell_{\mathcal M}^{\varepsilon}$ and $n_{\mathcal M}^{\varepsilon}$, satisfy the following properties, whose proofs are given in the
Supplemental Material.

\paragraph{(a) Local-unitary invariance.}
The GEL and GEN are invariant under local unitaries $U=U_1\otimes\cdots\otimes U_N$: since $\mathcal{M}(U\ket{\psi})=\mathcal{M}(\ket{\psi})$ for every $\ket\psi$, the sufficient sets $\mathcal S_i(\mathcal F)\in\mathcal S^*(\mathcal F)$ -- and hence $\KM,\NM$ -- are unchanged. Correspondingly, for any ensemble  $(\mathcal F,\mu)$, $\ell_{\mathcal M}^{\varepsilon}$ and $n_{\mathcal M}^{\varepsilon}$ are likewise invariant. In constrast, neither the exact nor typical quantities are invariant under SLOCC, which can alter $\mathcal M(\ket\psi)$ and the ordering of $f_\gamma(\ket\psi)$, changing the extremizing bipartition(s).

\paragraph{(b) Monotonicity under family enlargement.}
Let $\mathcal F$ is equipped with the probability measure $\mu$ and
$\mathcal F'\supseteq\mathcal F$ with the probability measure $\mu'$ with $\mu$ being the conditional measure induced by $\mu'$ on $\mathcal F$ -- i.e.\
$
\Pr_{\ket\psi\sim\mu}\big[\ket\psi\in A\big] =
\Pr_{\ket\psi'\sim\mu'}\big[\ket\psi'\in A \,\big|\, \ket\psi'\in\mathcal F\big],
$ for all  $A\subseteq\mathcal F$
then, writing $p_{\mathcal F}:=\Pr_{\ket\psi'\sim\mu'}[\ket\psi'\in\mathcal F]$, it follows that,
for every $\varepsilon\in[0,1]$,
\begin{equation}
\Big(\KM^{\epsilon}(\mathcal F),\,\NM^{\epsilon}(\mathcal F)\Big)
\ \preceq\
\Big(\KM^{\epsilon p_{\mathcal{F}}}(\mathcal F'),\,\NM^{\epsilon p_{\mathcal{F}}}(\mathcal F')\Big), \nonumber
\end{equation}
which reduces to the same-$\varepsilon$ statement $(\KM^{\epsilon}(\mathcal F),\,\NM^{\epsilon}(\mathcal F))
\ \preceq\
(\KM^{\epsilon}(\mathcal F'),\,\NM^{\epsilon}(\mathcal F'))$ when $p_{\mathcal F}=1$ (i.e.\ $\mathcal F'\setminus\mathcal F$ carries no $\mu'$-probability mass) and to the exact statement as $\varepsilon=0$ -- i.e. $(\KM(\mathcal F),\,\NM(\mathcal F))
\ \preceq\
(\KM(\mathcal F'),\,\NM(\mathcal F'))$.



\paragraph{(c) Union bound.}
Let $\mathcal F_1(\mathcal F_2$) be equipped with 
$\mu_1(\mu_2)$. Equip $\mathcal F_1\cup\mathcal F_2$ with the measure $\mu$ defined by the two-stage process: with probability
$p\in[0,1]$, sample $\ket{\psi}\in\mathcal F_1$ according to $\mu_1$; with probability $1-p$, sample $\ket{\phi}\in\mathcal F_2$ according to $\mu_2$, then for every
$\varepsilon\in[0,1]$,
\begin{equation}
\ell_{\mathcal M}^{\varepsilon}(\mathcal{F}_1\cup\mathcal{F}_2)
\le
\max\big\{\ell_{\mathcal M}^{\varepsilon}(\mathcal{F}_1),\,\ell_{\mathcal M}^{\varepsilon}(\mathcal{F}_2)\big\}. \nonumber
\end{equation}
For $\varepsilon=0$, this reduces to the corresponding exact GEL statement -- $\KM(\mathcal{F}_1\cup\mathcal{F}_2) \le \max\{\KM(\mathcal{F}_1),\KM(\mathcal{F}_2)\}$.


Together, these properties show that GEL and GEN, and the $\varepsilon$-typical GEL and GEN, behave as natural complexity measures for multipartite
entanglement evaluation.

\textit{Operational relevance of GEL and GEN--}
The quantities $\KM(\mathcal F)$ and $\NM(\mathcal F)$ characterize a state family $\mathcal F$ relative to a chosen bipartition-extremized GME measure $\mathcal M$. Their operational significance arises when the state-preparation procedure (e.g., a circuit, Hamiltonian, or dynamical protocol) is known, and hence the family $\mathcal F$ of possible output states is known, while the multipartite entanglement $\mathcal M(\ket\psi)$ produced in a given run is not.

As an example, suppose the preparation procedure generates the $N$-qubit generalized GHZ family \cite{gGHZ1,gGHZ2}
\begin{equation}
\label{eq:gGHZ}
\ket{g\mathrm{GHZ}_N}
=a\ket{0}^{\otimes N}+b\ket{1}^{\otimes N},
\end{equation}
so that
$\mathcal F=\{\ket{g\mathrm{GHZ}_N}:a,b\in\mathbb C,\ |a|^2+|b|^2=1\}$,
and let $\mathcal M=C_g$. Since all single-qubit marginals are identical and every bipartition $\gamma\in\mathcal B_N$ yields the same bipartite entanglement, the extremization is attained by any bipartition with $|\gamma|=1$. Thus, a singleton set $\mathcal S=\{\gamma\}$ is minimally $C_g$-sufficient (Definition~\ref{def:GELandGEN}), giving
\begin{equation}
\KM(\mathcal F)=1,\qquad
\NM(\mathcal F)=1.
\end{equation}
In contrast, without prior knowledge of the preparation procedure, an arbitrary $N$-qubit state generally requires consideration of all admissible bipartitions to identify the extremizing partition defining $C_g$.

We note that although Definitions~\ref{def:GELandGEN} and
\ref{def:typical} and their subsequent properties are formulated for pure states, they extend unchanged to any class of mixed states for which $\mathcal M$ remains bipartition-extremized as in Eq.~\eqref{eq:bpextreme}, upon replacing $\ket{\psi}\mapsto\rho$. Concrete examples include a class of mixed fermionic stabilizer states~\cite{GMN_fermionic}, frustration-free fermion ground states, Majorana dimer states, even-degree graph states, and Majorana fermion code states~\cite{GMN_fermionic2}. See the supplemental material for further details.

\textit{GEL and GEN for Haar-random states--}
Haar-random pure states~\cite{page,Mele2024introductiontohaar} provide a natural ensemble $(\mathcal F_{\mathrm{Haar}},\mu_{\mathrm{Haar}})$ for analyzing the typical behavior of $\ell_{\mathcal M}^{\varepsilon}(\mathcal F_{\mathrm{Haar}})$ and $n_{\mathcal M}^{\varepsilon}(\mathcal F_{\mathrm{Haar}})$, where $\mathcal F_{\mathrm{Haar}}$ is
equipped with the Haar measure $\mu_{\mathrm Haar}$. For pure states, the bipartite concurrence $f_\gamma(\ket\psi)=C_\gamma(\ket\psi)$ across any cut $\gamma\in\mathcal B_N$ is a monotonically increasing function of
$1-\Tr\rho_\gamma^2$, so the minimization over concurrences defining $C_g$ in Eq.~(\ref{eq:GMC}) is equivalent to a \emph{maximization} over subsystem purities $P_\gamma:=\Tr\rho_\gamma^2$.

For a Haar-random pure state on $\mathcal H_A\otimes\mathcal H_B$ with $\dim\mathcal H_A=d_A$, $\dim\mathcal H_B=d_B$, denote the mean and variance of $P_\gamma$ by $\overline P(d_A,d_B)$ and $\sigma^2(d_A,d_B)$ respectively, with explicit expressions given in the Supplemental Material and in~\cite{page}. In an $N$-qubit system, a bipartition $\gamma$ of size $|\gamma|$ corresponds to $d_A=2^{|\gamma|}$, $d_B=2^{N-|\gamma|}$; since the purity distribution depends on $\gamma$ only through $|\gamma|$, write
$\overline P_{|\gamma|}:=\overline P(2^{|\gamma|},2^{N-|\gamma|})$ and $\sigma_{|\gamma|}^2:=\sigma^2(2^{|\gamma|},2^{N-|\gamma|})$. The mean purities satisfy the strict ordering
\begin{equation}
\overline P_1 > \overline P_2 > \cdots > \overline P_{\lfloor N/2\rfloor},
\end{equation}
with $\overline P_{|\gamma|}=\overline P_{N-|\gamma|}$ by the symmetry of the bipartition. The most unbalanced cut ($|\gamma|=1$), requiring only a single-qubit measurement, has the largest mean purity. Since the variances $\sigma_{|\gamma|}^2$ are strongly suppressed, this cut is expected to maximize the purity with high probability. This intuition is formalized in the following proposition.

\begin{proposition}
\label{prop:1}
Let $\ket\psi\sim\mu_{\mathrm Haar}$ be drawn from the Haar measure on $N$ qubits, and for $\gamma\in\mathcal B_N$ with size $|\gamma|$ let \(
f_\gamma(\ket\psi) = C_{\gamma}(\ket\psi) = \sqrt{2\big(1-\Tr\rho_A^2\big)}
\)
denote the bipartite concurrence across the cut $\gamma$. There
exists a sequence $\varepsilon_N\to0$, decaying exponentially in $N$ for large N ($\varepsilon_N\sim \mathcal{O}(2^{-N})$), such
that
\begin{equation}
\Pr_{\ket\psi\sim\mu_{\mathrm H}}\!\left[
\underset{\gamma\,:\,|\gamma|=1}{\mathrm{min}}\ C_{\gamma}(\psi) = \ C_{g}(\psi)
\right]
\;\ge\; 1-\varepsilon_N,
\label{eq:prop1}
\end{equation}
i.e.\ with probability at least $1-\varepsilon_N$ the minimization of $C_g(\psi)$ over $\mathcal B_N$ is attained already at bipartition size $|\gamma|=1$.
\end{proposition}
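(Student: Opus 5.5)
The plan is to use the reformulation stated just before the proposition: minimizing $C_\gamma$ is the same as maximizing the purity $P_\gamma=\Tr\rho_\gamma^2$. So the event in Eq.~\eqref{eq:prop1} fails only if some cut $\gamma$ with $|\gamma|\ge 2$ has purity exceeding every single-qubit purity. I will not compare $P_\gamma$ with the random quantities $P_{\{i\}}$. Instead I will use the deterministic fact that any single-qubit reduced state satisfies $\Tr\rho_i^2\ge 1/2$, since the smallest possible purity of a $2\times 2$ density matrix is $1/2$. It follows that
\begin{equation*}
\Pr\Big[\min_{|\gamma|=1}C_\gamma\neq C_g\Big]\;\le\;\Pr\Big[\exists\,\gamma,\ |\gamma|\ge 2:\ P_\gamma\ge \tfrac12\Big]\;\le\;\sum_{k=2}^{\lfloor N/2\rfloor}\binom{N}{k}\Pr\big[P_k\ge \tfrac12\big],
\end{equation*}
where $P_k$ denotes the purity of a fixed cut of size $k$. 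This is a union bound over all cuts, and it uses the fact that the Haar purity distribution depends on $\gamma$ only through $|\gamma|$. Strictly speaking, the event $P_\gamma=P_{\{i\}}$ does not spoil the equality of the minima, so the inclusion above is conservative.

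The next step bounds each term. The mean is $\overline P_k=(2^k+2^{N-k})/(2^N+1)\le 2^{-k}+2^{-(N-k)}$. For $2\le k\le N/2$ and $N$ moderately large, this is at most $1/4+O(2^{-N/2})$. The gap $\tfrac12-\overline P_k$ is therefore bounded below by a constant, say $\ge 1/5$, uniformly in $k$. Chebyshev's inequality then gives $\Pr[P_k\ge 1/2]\le 25\,\sigma_k^2$. I will take the exact variance from the Lubkin/Page expression quoted in the Supplemental Material. It is of the form $\sigma^2(d_A,d_B)=\frac{2(d_A^2-1)(d_B^2-1)}{(d_Ad_B+1)^2(d_Ad_B+2)(d_Ad_B+3)}$, which gives the uniform bound $\sigma_k^2\le 2/(d_Ad_B)^2=2\cdot 2^{-2N}$ for every $k$. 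Summing over at most $2^{N-1}$ cuts yields a failure probability of at most $25\cdot 2^{N-1}\cdot 2\cdot 2^{-2N}=O(2^{-N})$. I will define $\varepsilon_N$ as this bound, handling finitely many small $N$ by the trivial bound $\varepsilon_N\le 1$.

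The main obstacle is obtaining a variance bound that is uniform over all cut sizes. A naive estimate of order $2^{-2(N-k)}$ is too weak near balanced cuts: combined with the $\binom{N}{N/2}$ cuts there, it would give an $O(1)$ total. The key step is therefore to verify from the exact moment formula that $\sigma_k^2=O(2^{-2N})$ independently of $k$; I will check this first. If the formula is awkward, I will fall back on Lévy's lemma. The map $\psi\mapsto\Tr\rho_\gamma^2$ is Lipschitz on the unit sphere of $\mathbb C^{2^N}$ with constant at most $4$, which gives $\Pr[P_k\ge 1/2]\le 2\exp(-c\,2^N/25)$. After the union bound this produces a doubly exponentially small $\varepsilon_N$, which in particular implies the stated $O(2^{-N})$ rate.
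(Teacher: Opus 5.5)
Your proof is correct, and it takes a more elementary route at the key step. The paper fixes a single-qubit cut $\gamma_1$ and compares the two \emph{random} purities $P_{\gamma_1}$ and $P_{\gamma_k}$ via its midpoint-barrier lemma (Lemma~\ref{lem:chebyshev}). That lemma needs the mean gap $\overline P_1-\overline P_k$ and the variances of \emph{both} purities; the paper then union-bounds the per-pair failure $O(2^{-2N})$ over the $2^{N-1}$ cuts.

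You replace the random competitor by the deterministic floor $\Tr\rho_i^2\ge \tfrac12$. Only a one-sided Chebyshev tail of $P_k$ above a fixed threshold is then needed, and the single-qubit fluctuations never enter. This works because $\overline P_1=\tfrac12+O(2^{-N})$ sits essentially at that floor, while $\overline P_k\le\tfrac14+2^{-N/2}$ for $k\ge 2$. It also gives a slightly better constant ($50\cdot 2^{-2N}$ per cut versus roughly $224\cdot 2^{-2N}$). The paper's pairwise comparison does not rely on such a floor, so it is the more portable template for functionals lacking one. For $C_g$, however, it gives no better rate than yours.

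Your version also makes the uniformity in $k$ explicit: a gap $\ge 1/5$ and $\sigma_k^2\le 2\cdot 2^{-2N}$ for all $2\le k\le\lfloor N/2\rfloor$. The paper states its mean-gap and variance estimates only ``for fixed $k$'' before union-bounding over cuts with $k$ up to $N/2$. The exact formulas do supply uniform bounds there as well, so this is a presentational gap in the paper rather than an error.

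Finally, your L\'evy-lemma fallback, with Lipschitz constant at most $4$, would give a doubly exponentially small $\varepsilon_N$. This shows the $O(2^{-N})$ rate in the proposition reflects the use of Chebyshev rather than the true failure probability.
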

The proof is deferred to the supplemental material.

By leveraging the $\varepsilon$-typical framework (Definition~\ref{def:typical}), we capture the complexity of the practical task of quantifying GME in typical highly entangled states. Although the extremizing cut is a particular single-qubit bipartition, an experimentalist without \emph{a priori} knowledge of which one, must measure the entire equivalent subspace. Proposition~\ref{prop:1} shows that, with probability
$1-\varepsilon_N$, the minimization is confined to
$\mathcal B_N^{(1)}:=\{\gamma\in\mathcal B_N:|\gamma|=1\}$, so the
$\varepsilon_N$-typical sufficient measurement set consists of exactly $N$ single-qubit measurements.

\begin{figure}
    \centering
    \includegraphics[width=0.75\linewidth]{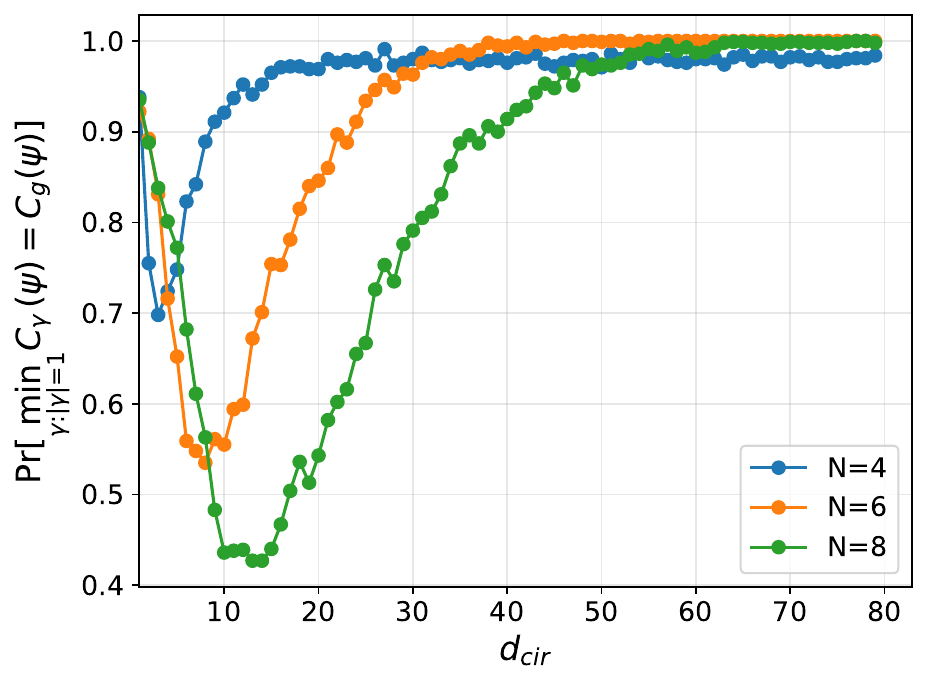}
    \caption{Behavior of the $C_g$ under local random circuits. The probability that the minimization defining $C_g$ is attained by the most unbalanced bipartition ($|\gamma|=1$) increases rapidly with $d_{\rm cir}$ and converges to unity for all system sizes considered. At small depths several cuts can compete due to the absence of GME, but this ambiguity disappears once scrambling develops. The minimization then becomes overwhelmingly dominated by the single-qubit cut.}
    \label{fig:cgmevsdepth2}
\end{figure}


\begin{result}
\label{prop:main}
Let $\mathcal F_{\mathrm{Haar}}$ denote the ensemble of $N$-qubit pure states equipped with the Haar measure $\mu_{\mathrm Haar}$, and let $\mathcal M=C_g$ be the GME measure. Then the set of all single-qubit bipartitions $\mathcal B_N^{(1)}$ is $\varepsilon_N$-typical $C_g$-sufficient for $\mathcal F_{\mathrm{Haar}}$, with $\varepsilon_N=O(2^{-N})$ as given in Proposition~\ref{prop:1}. Consequently, the $\varepsilon_N$-typical GEL and GEN attain the respective values,
\begin{equation}
\ell_{C_g}^{\varepsilon_N}(\mathcal F_{\mathrm{Haar}}) = 1,
\qquad
n_{C_g}^{\varepsilon_N}(\mathcal F_{\mathrm{Haar}}) = N.
\end{equation}
\end{result}

This drastically reduces the practical measurement cost relative to the worst-case bounds: the practical measurement size drops from
$\lfloor N/2\rfloor$ qubits to a single qubit ($\ell_{C_g}^{\varepsilon_N}(\mathcal F_{\mathrm{Haar}})=1$), and the worst-case number of measurements drops from the exponentially large $2^{N-1}-1$ to a strictly linear cost of $N$
($n_{C_g}^{\varepsilon_N}(\mathcal F_{\mathrm{Haar}})=N$).


Since $\ell_{C_g}^{\varepsilon_N}(\mathcal F_{\mathrm{Haar}}) = 1$, the average GMC follows from a perturbative expansion about the Haar-average single-qubit purity (see Supplemental Material):
\begin{equation}
\mathbb{E}[C_g]
\approx \sqrt{2(1-\overline P_1)}
-\frac{\sigma_1^2}{4\sqrt{2}\,(1-\overline P_1)^{3/2}},
\label{Eq:expected_Cg}
\end{equation}
implying $\mathbb{E}[C_g]\to1$ as $N\to\infty$.

\textit{Beyond Haar randomness: local random circuits--}
While Haar-random states provide a useful benchmark, physically realizable many-body states occupy only a restricted subset of Hilbert space. We consider the random-edge model of~\cite{QRC2_sid}. Starting from $\ket{00\cdots0}$, each circuit layer consists of selecting an adjacent pair $(i,i+1)$ uniformly at random from the $N-1$ edges of a one-dimensional chain and applying a Haar-random $U(4)$ gate. After $d_{\rm cir}$ such layers (the circuit depth), the resulting ensemble is denoted $\mathcal F_{\rm cir}(d_{\rm cir})$, with induced measure $\mu_{\rm cir}(d_{\rm cir})$.

\textit{Figure~\ref{fig:cgmevsdepth2} shows that
$\ell_{C_g}^{\varepsilon}\!\big(\mathcal F_{\rm cir}(d_{\rm cir})\big)=1$ and $n_{C_g}^{\varepsilon}\!\big(\mathcal F_{\rm cir}(d_{\rm cir})\big)=N$, with $\varepsilon\to0$ as the circuit depth $d_{\rm cir}$ and N increases.}

As local random circuits enter the volume-law regime, subsystem purities $P_\gamma$ concentrate around their Haar averages
$\overline P_k$~\cite{QRC1,QRC2_sid}. Since $C_g$ depends only on these purities, Proposition~\ref{prop:main} predicts that the minimization defining $C_g$ dynamically collapses onto the single-qubit subspace $\mathcal B_N^{(1)}$ as $\mathcal F_{\rm cir}(d_{\rm cir})$ approaches the Haar ensemble $\mathcal F_{\rm Haar}$. Thus, the effective reduction of GMC to a single-qubit quantity, $\ell_{C_g}^{\varepsilon}\!\big(\mathcal F_{\rm cir}(d_{\rm cir})\big)=1$ and $n_{C_g}^{\varepsilon}\!\big(\mathcal F_{\rm cir}(d_{\rm cir})\big)=N$, emerges dynamically under scrambling. See the supplemental material for further discussion.

\textit{Classification and partial ordering of GME states--}
Beyond quantifying the measurement cost of specific target states, GEL and GEN establish a formal operational taxonomy for multipartite entanglement. For a chosen $\mathcal M$, any state family $\mathcal F$ maps to a coordinate pair $\big(\KM(\mathcal F),\NM(\mathcal F)\big)$, defining equivalence classes based purely on measurement complexity. This structure naturally induces a strict partial order of experimental hardness. We say a state family $\mathcal F_A$ is strictly easier to evaluate than $\mathcal F_B$, denoted $\mathcal F_A \prec \mathcal F_B$, if $\KM(\mathcal F_A)\le \KM(\mathcal F_B)$ and $\NM(\mathcal F_A)\le \NM(\mathcal F_B)$, with at least one inequality strict. Families with incomparable coordinates under this ordering formalize experimental hardware trade-offs, such as the choice between executing a few macroscopic joint measurements versus a large ensemble of local probes.

Finally, this extends naturally to the statistical properties of
many-body systems. For an ensemble $(\mathcal F,\mu)$ with state preparation governed by the probability measure $\mu$, families can be classified by their $\varepsilon$-typical coordinates,
$\big(\ell_{\mathcal M}^{\varepsilon}(\mathcal F),\,n_{\mathcal
M}^{\varepsilon}(\mathcal F)\big)$, and the partial order $\prec$ extends directly to these coordinates for fixed $\varepsilon$. Here, $\varepsilon$ is an extra tunable operational parameter, allowing experimentalists to formally bound the trade-off between the strictness of the GME certification and the reduction of required measurement resources.

\textit{Summary and outlook--}We have introduced the concepts of \emph{GME evaluation length} and \emph{GME evaluation number}, operational quantities that characterize the measurement complexity associated with evaluating multipartite entanglement measures defined through optimizations over bipartitions. Using GMC on Haar-random states as a paradigmatic example, we showed that the measurement complexity of GME can be dramatically smaller than suggested by its formal definition. Beyond quantifying measurement cost, GEL and GEN induce a natural measurement-complexity classification of multipartite states and ensembles, together with a partial ordering based on the resources required for GME evaluation. 



More broadly, GEL and GEN provide a systematic way to exploit prior information about the relevant state family—through its preparation circuit, governing dynamics, symmetries, or Hamiltonian—to determine the measurement complexity of GME, which is crucial for quantum communication \cite{Mark1999QSSintro, Epping2017NQKD, Hah2020AQCKAintro}, quantum computation \cite{GMEinQC1,GMEinQC2}, and the characterization of correlated many-body systems \cite{QMB1,QMB2}. Future directions include extending the analysis of GEL and GEN to other physically relevant multipartite-state families—including ground states of local Hamiltonians, critical systems~\cite{Qcritical}, many-body localized phases~\cite{MBL1,MBL2}, Random hadamard states \cite{QHS1} and tensor-network states~\cite{MPstates}—as well as developing analogous notions of evaluation length and evaluation number for other quantum resources, such as coherence~\cite{coherence1,coherence2} and magic or nonstabilizerness~\cite{magic}. It may also guide the construction of quantitative GME witnesses~\cite{witness1,witness3,witness2} by identifying the bipartitions sufficient for certification with minimal measurement overhead. We thus expect GEL and GEN to provide a broadly applicable operational framework connecting multipartite-entanglement structure with the practical cost of its evaluation and certification.

\textit{Acknowledgements}: Funding from the IITB TRUST Labs grant DO/2023-SBST002-007 and the IITB seed funding is gratefully acknowledged.

\bibliography{ref.bib}

\clearpage


\onecolumngrid








\begin{center}
{\Large\bf Supplemental Material}\\[1em]
for\\[0.5em]
{\bf Minimal Subsystems for Characterizing Genuine Multipartite Entanglement}
\end{center}

\section{1. Brief Review of Haar-random states and genuine multipartite entanglement measures}
\subsection{A. Haar-Random States and Typicality}

We collect a few standard results for Haar-random pure states on a bipartite Hilbert space
$\mathcal{H}=\mathcal{H}_A\otimes\mathcal{H}_B$ with $\dim\mathcal{H}_A=m\le n=\dim\mathcal{H}_B$.

\paragraph{Average purity.}
The average purity of the subsystem $A$ (second moment of the reduced state) is
\begin{equation}
\mathbb{E}_{\psi}\big[\operatorname{Tr}\rho_A^2\big] \;=\; \frac{m+n}{mn+1}\,,
\label{eq:lubkin}
\end{equation}
a formula originally obtained by Lubkin. The variance is given by 
\begin{equation}
\operatorname{Var}\!\left[\operatorname{Tr}(\rho_A^2)\right]
=
\frac{2(m^2-1)(n^2-1)}
{(mn+1)^2(mn+2)(mn+3)}.
\end{equation}

\paragraph{Joint eigenvalue distribution.}
Let $\{\lambda_i\}_{i=1}^m$ be the eigenvalues of the reduced density matrix $\rho_A=\operatorname{Tr}_B|\psi\rangle\langle\psi|$.
The joint probability density of the spectrum under the Haar measure is
\begin{equation}
P(\{\lambda_i\}) \;=\; C_{m,n}\;\delta~\Big(1-\sum_{i=1}^m\lambda_i\Big)\,
\prod_{i=1}^m \lambda_i^{\,n-m}\prod_{1\le i<j\le m}(\lambda_i-\lambda_j)^2 ,
\end{equation}
where $C_{m,n}$ is the normalization constant and the $\delta$-function enforces unit trace.
This form (first derived by Lloyd and Pagels \cite{} and widely used thereafter) makes spectral averages tractable.

\paragraph{Average von Neumann entropy.}
For $m\le n$ the average von Neumann entropy of the subsystem $A$ under the Haar measure is given by Page's formula \cite{page}
\begin{equation}
\mathbb{E}_{\psi}\big[S(\rho_A)\big] \;=\; \sum_{j=n+1}^{mn} \frac{1}{j}\;-\;\frac{m-1}{2n}\,.
\label{eq:page}
\end{equation}
In the regime $n\gg m$ this reduces to the familiar approximation
\begin{equation}
\mathbb{E}[S(\rho_A)] \approx \ln m \;-\; \frac{m}{2n} \;+\; O~\!\big(n^{-2}\big),
\end{equation}
expressing the statement that the subsystem entropy is very close to its maximal value $\ln m$. Sen gave a compact rigorous proof of the exact Page expression \cite{siddharthasen}; several other independent proofs exist as well.


A parallel thread developed for higher moments and the full distribution of bipartite entanglement. Giraud \cite{Giraud2007} derived analytic expressions for the probability density of the purity (and linear entropy) for bipartite Haar-random pure states, computed all moments of the purity in closed form. These results complement the permutation/Weingarten-style moment formulae and have been used to obtain exact finite-\(N\) distributions for linear statistics of the Schmidt eigenvalues and related entanglement quantifiers. Other complementary approaches have emerged: Życzkowski and Sommers studied induced measures and provided many useful ensemble averages and moment formulae for reduced states \cite{ZyczkowskiSommers2001}; Coulomb-gas and random-matrix techniques produced large-\(N\) rate functions and revealed phase-transition behaviour in the distributions of Rényi entropies and purity; and several works computed distributional properties of concurrence- and determinant-based entanglement measures at finite \(N\). 

Further work has refined the notion of typicality by considering physically constrained ensembles and multipartite entanglement. Extensions beyond the Haar measure to ensembles generated by local Hamiltonians or shallow circuits characterize regimes in which Haar-like statistics persist or break down~\cite{QRC2_sid}. For multipartite settings, probabilistic approaches to entanglement in random states have been developed by Facchi \emph{et al.}~\cite{Facchi2006}, while Fitter \emph{et al.} have introduced algorithmic and probabilistic methods to estimate multipartite quantities such as the injective norm (equivalently, the geometric measure) for random states.

\subsection{B. Concurrence-Based Genuine Multipartite Entanglement}

Entanglement in many-body quantum systems can exhibit a genuinely multipartite structure, meaning it cannot be decomposed into correlations across any bipartition of the system. A pure $N$-partite state $|\psi\rangle\in\bigotimes_{i=1}^N\mathcal H_i$ is said to be genuinely multipartite entangled if it is inseparable across every bipartition. Since the number of nontrivial bipartitions grows exponentially with $N$, quantifying GME naturally involves a combinatorial optimization over bipartitions. This optimization constitutes the principal source of complexity for many GME measures.

The Genuine Multipartite Concurrence (GMC), introduced by Ma \emph{et al.}~\cite{Ma2011}, provides a natural extension of bipartite concurrence to multipartite systems by assigning the GME of a state according to its least entangled bipartition. For a bipartition $A|B$, let
\[
\rho_A=\operatorname{Tr}_B(|\psi\rangle\langle\psi|)
\]
denote the reduced density operator of subsystem $A$. The bipartite concurrence across the cut is

\begin{equation}
C_{A|B}(|\psi\rangle)
=
\sqrt{2\left(1-\operatorname{Tr}\rho_A^2\right)},
\label{eq:concurrence-purity}
\end{equation}

which depends only on the purity of the reduced state.

The GMC is defined as

\begin{equation}
C_g(|\psi\rangle)
=
\min_{A|B} C_{A|B}(|\psi\rangle).
\label{eq:GMC-def}
\end{equation}

By construction, $C_g(|\psi\rangle)>0$ if and only if $|\psi\rangle$ is genuinely multipartite entangled. The measure is invariant under local unitary transformations, is non-increasing under LOCC, and reduces to the standard concurrence for bipartite pure states. Its operational interpretation is particularly transparent: the value of the measure is determined entirely by the weakest entangled bipartition of the state.

Equation~(\ref{eq:GMC-def}) reveals that evaluating $C_g$ requires computing the reduced-state purity for every inequivalent bipartition before identifying the minimizing cut. Consequently, both the size of the required subsystem measurements and the exponentially large optimization over bipartitions contribute to the overall evaluation complexity. Understanding when this optimization can be substantially simplified is one of the central questions addressed in this work.

For mixed states, GMC is extended through the convex-roof construction.

\subsection{C. Generalized Geometric Measure}

Another widely used genuine multipartite entanglement measure is the Generalized Geometric Measure (GGM)~\cite{GGM}. Unlike GMC, which is expressed in terms of subsystem purities, GGM quantifies multipartite entanglement through the maximal overlap of a state with the set of biseparable states.

For a pure multipartite state $|\psi\rangle$, the GGM is defined as

\begin{equation}
\mathcal G(|\psi\rangle)
=
1-
\max_{|\phi\rangle\in\mathcal S_{\mathrm{bisep}}}
|\langle\phi|\psi\rangle|^2,
\label{eq:GGM-def}
\end{equation}

where $\mathcal S_{\mathrm{bisep}}$ denotes the set of all pure biseparable states.

A key simplification shows that the optimization over all biseparable states can be reduced to an optimization over bipartitions. Specifically,

\begin{equation}
\mathcal G(|\psi\rangle)
=
1-
\max_{A|B}
\lambda_{A|B,\max},
\label{eq:GGM-schmidt}
\end{equation}

where $\lambda_{A|B,\max}$ is the largest eigenvalue of the reduced density matrix $\rho_A$, or equivalently the square of the largest Schmidt coefficient across the bipartition $A|B$.

Like GMC, the GGM is a faithful measure of genuine multipartite entanglement: it vanishes exactly for biseparable states, is invariant under local unitary transformations, and is monotonic under LOCC for pure states~\cite{GGM}. Although its mathematical form differs from that of GMC, its computational structure is remarkably similar. Evaluating GGM likewise requires solving an optimization over all bipartitions of the system, with each bipartition contributing the largest Schmidt coefficient (or equivalently the largest eigenvalue of the reduced density operator). Consequently, the exponential growth in the number of bipartitions remains the dominant source of computational and experimental complexity.

\subsection{D. Fermionic Genuine Multipartite Negativity}

We now show that the fermionic genuine multipartite negativity (fGMN)~\cite{GMN_fermionic}~fits within the framework of bipartition-extremized measures. The relevant bipartite quantity is the \emph{fermionic negativity}, defined using the fermionic partial time-reversal operation across a bipartition.

For a state $\rho$ viewed across the bipartition $\gamma\mid\bar{\gamma}$, the fermionic negativity is
\begin{equation}
\mathcal{N}^{F}_{\gamma}(\rho)
=
\frac{1}{2}
\left\|
\rho^{R_\gamma}
\right\|_1
-\frac{1}{2}
=
\frac{1}{2}
\operatorname{tr}
\left|
\rho^{R_\gamma}
\right|
-\frac{1}{2},
\end{equation}
where $\|\cdot\|_1$ denotes the trace norm and $\rho^{R_\gamma}$ is the fermionic partial time-reversal of $\rho$ with respect to the subsystem $\gamma$.

The fermionic partial time-reversal acts on Majorana operators localized in $\gamma$ by multiplying each Majorana operator by a factor of $i$. Equivalently, when an operator is expanded in a basis of Majorana monomials, every Majorana operator belonging to $\gamma$ acquires this phase under $R_\gamma$. This modification is essential because the ordinary partial transpose fails to correctly capture fermionic entanglement due to exchange statistics and the fermion-parity superselection rule. The resulting fermionic negativity is an entanglement monotone under fermionic local operations and classical communication (fLOCC).

The connection with bipartition extremization follows from the inequality
\begin{equation}
\mathcal{N}_{F}(\rho)
\le
\min_{\gamma\in\mathcal{B}_N}
\mathcal{N}^{F}_{\gamma}(\rho),
\end{equation}
which shows that the fGMN is always upper bounded by the smallest fermionic negativity over all bipartitions. Thus, the weakest bipartition determines an upper bound on the genuine multipartite entanglement.

For pure states, this bound is saturated and the fermionic GMN is a bipartition-extremized measure in the sense of Definition~1. Explicitly,
\begin{equation}
\mathcal{N}_{F}\!\left(|\psi\rangle\langle\psi|\right)
=
\min_{\gamma\in\mathcal{B}_N}
f_\gamma(|\psi\rangle),
\end{equation}
where $f_\gamma(|\psi\rangle)=\mathcal{N}^{F}_{\gamma}(|\psi\rangle\langle\psi|)$.

The bipartition-extremized characterization also extends beyond pure states to an important class of mixed fermionic stabilizer states. These are states of the form
\begin{equation}
\rho
=
2^{-(N-S)}
\prod_{s=1}^{S}
\frac{1}{2}
\left(I+g_s\right),
\end{equation}
where the $\{g_s\}$ are commuting even-Majorana stabilizer generators. Such states are maximally mixed over the simultaneous $+1$ eigenspace of the stabilizer group and include reduced density matrices of pure fermionic stabilizer states, Majorana dimer states, even-degree graph states, and states arising in Majorana fermion codes.

For this entire stabilizer class,
\begin{equation}
\mathcal{N}_{F}(\rho)
=
\min_{\gamma\in\mathcal{B}_N}
\mathcal{N}^{F}_{\gamma}(\rho),
\end{equation}
so the fGMN remains bipartition-extremized even though the states are mixed.

\section{2. Proofs of Properties (a)--(c)}

Throughout, recall that $\mathcal S_i(\mathcal F)\in\mathcal S^*(\mathcal F)$
denotes any set attaining the lexicographic minimum of
$\big(\max_{\gamma\in\mathcal S}|\gamma|,|\mathcal S|\big)$ among sets
$\mathcal S$ satisfying Eq.~(3) for all
$\ket\psi\in\mathcal F$, and that
$\KM(\mathcal F)=\max_{\gamma\in\mathcal S_i(\mathcal F)}|\gamma|$,
$\NM(\mathcal F)=|\mathcal S_i(\mathcal F)|$. Similarly,
$\mathcal S_i(\mathcal F)\in\mathcal S^{*,\varepsilon}(\mathcal F)$ denotes
any set attaining the same lexicographic minimum among sets $\mathcal S$
satisfying the $\varepsilon$-typical condition
Eq.~(6), with
$\ell_{\mathcal M}^\varepsilon(\mathcal F)=\max_{\gamma\in\mathcal S_i(\mathcal F)}|\gamma|$,
$n_{\mathcal M}^\varepsilon(\mathcal F)=|\mathcal S_i(\mathcal F)|$.

We will use the following elementary fact repeatedly.

\begin{lemma}[Monotonicity in $\mathcal S$]
\label{lem:mono}
Let $\mathcal S_1\subseteq\mathcal S_2\subseteq\mathcal B_N$. If
\begin{equation}
\underset{\gamma\in\mathcal S_1}{\mathrm{extr}}\,f_\gamma(\ket\psi)=\mathcal M(\ket\psi)
\end{equation}
for a state $\ket\psi$, then
$\underset{\gamma\in\mathcal S_2}{\mathrm{extr}}\,f_\gamma(\ket\psi)=\mathcal M(\ket\psi)$
as well.
\end{lemma}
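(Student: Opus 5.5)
The plan is to argue directly from the definition of the extremum over a finite set, treating the minimum and maximum cases separately. Recall from Definition~1 that $\mathcal M(\ket\psi)=\operatorname{extr}_{\gamma\in\mathcal B_N} f_\gamma(\ket\psi)$. Since $\mathcal S_1\subseteq\mathcal S_2\subseteq\mathcal B_N$ are finite and nonempty (nonemptiness of $\mathcal S_1$ is implicit, as otherwise the extremum over it is undefined), all extrema below exist. I will give the argument for $\operatorname{extr}=\min$; the maximum case follows by replacing $f_\gamma$ with $-f_\gamma$, or equivalently by reversing every inequality.

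The key step is a sandwich inequality. Because $\mathcal S_1\subseteq\mathcal S_2$, the minimum over the larger set can only be smaller, so
\begin{equation}
\min_{\gamma\in\mathcal S_2} f_\gamma(\ket\psi)\le \min_{\gamma\in\mathcal S_1} f_\gamma(\ket\psi).
\end{equation}
By the hypothesis of Lemma~\ref{lem:mono}, the right-hand side equals $\mathcal M(\ket\psi)$. In the other direction, $\mathcal S_2\subseteq\mathcal B_N$ gives
\begin{equation}
\min_{\gamma\in\mathcal S_2} f_\gamma(\ket\psi)\ge \min_{\gamma\in\mathcal B_N} f_\gamma(\ket\psi)=\mathcal M(\ket\psi).
\end{equation}
Combining the two inequalities yields $\min_{\gamma\in\mathcal S_2}f_\gamma(\ket\psi)=\mathcal M(\ket\psi)$, which is the claim. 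Note that the argument uses only the state $\ket\psi$ fixed in the hypothesis, so no uniformity over a family $\mathcal F$ is needed.

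There is no real obstacle here. The only point requiring care is that $\mathcal M$ and the restricted extremum use the same $\operatorname{extr}$ operation, which holds by Definition~1. It is also worth remarking how the lemma will be used later: any superset of an ($\varepsilon$-typical) $\mathcal M$-sufficient set is again ($\varepsilon$-typical) sufficient. For the typical version, the event in Eq.~(6) for $\mathcal S_1$ is contained in the corresponding event for $\mathcal S_2$, so its probability can only increase.
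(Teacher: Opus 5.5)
Your proposal is correct and follows the paper's proof essentially verbatim. Both use the same sandwich $\mathcal M(\ket\psi)\le\min_{\gamma\in\mathcal S_2}f_\gamma(\ket\psi)\le\min_{\gamma\in\mathcal S_1}f_\gamma(\ket\psi)=\mathcal M(\ket\psi)$, obtained from $\mathcal S_2\subseteq\mathcal B_N$ and $\mathcal S_1\subseteq\mathcal S_2$ respectively, with the maximum case handled by reversing the inequalities.
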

\begin{proof}
Take $\mathrm{extr}=\min$ (the case $\mathrm{extr}=\max$ is identical with
inequalities reversed). Since $\mathcal M(\ket\psi)=\min_{\gamma\in\mathcal
B_N}f_\gamma(\ket\psi)$, we have $f_\gamma(\ket\psi)\ge\mathcal M(\ket\psi)$
for every $\gamma\in\mathcal B_N$, and hence
$\min_{\gamma\in\mathcal T}f_\gamma(\ket\psi)\ge\mathcal M(\ket\psi)$ for any
$\mathcal T\subseteq\mathcal B_N$. By hypothesis,
$\min_{\gamma\in\mathcal S_1}f_\gamma(\ket\psi)=\mathcal M(\ket\psi)$. Since
$\mathcal S_1\subseteq\mathcal S_2$, minimizing over the larger set gives a
value no larger than minimizing over the smaller one:
\begin{equation}
\mathcal M(\ket\psi)\le\min_{\gamma\in\mathcal S_2}f_\gamma(\ket\psi)
\le\min_{\gamma\in\mathcal S_1}f_\gamma(\ket\psi)=\mathcal M(\ket\psi).
\end{equation}
Both inequalities are therefore equalities, proving the claim.
\end{proof}

\subsection*{Property (a): Local-unitary invariance}

\paragraph{(i) Proof for the exact quantities $\KM,\NM$.}
Let $U=U_1\otimes\cdots\otimes U_N$. For any bipartition
$\gamma=(A,\bar A)\in\mathcal B_N$, the reduced state of $U\ket\psi$ on $A$
is
\begin{equation}
\rho_A^{(U\psi)} = U_A\,\rho_A^{(\psi)}\,U_A^\dagger, \qquad U_A=\bigotimes_{i\in A}U_i,
\end{equation}
which is unitarily conjugate to $\rho_A^{(\psi)}$ and hence has the same
spectrum. Since $f_\gamma$ depends on $\ket\psi$ only through the spectrum
of $\rho_A^{(\psi)}$ (e.g.\ entanglement entropy, concurrence, etc.),
\begin{equation}
f_\gamma(U\ket\psi) = f_\gamma(\ket\psi) \qquad \forall\,\gamma\in\mathcal B_N,\ \forall\,\ket\psi.
\end{equation}
Consequently, for \emph{any} fixed subset $\mathcal S\subseteq\mathcal B_N$,
\begin{equation}
\underset{\gamma\in\mathcal S}{\mathrm{extr}}\,f_\gamma(U\ket\psi)
=\underset{\gamma\in\mathcal S}{\mathrm{extr}}\,f_\gamma(\ket\psi),
\qquad
\mathcal M(U\ket\psi)=\mathcal M(\ket\psi).
\end{equation}
Hence Eq.~(3) holds for $\mathcal S$ at $U\ket\psi$ if and
only if it holds at $\ket\psi$. Therefore $\mathcal S$ is $\mathcal
M$-sufficient for $U\mathcal F=\{U\ket\psi:\ket\psi\in\mathcal F\}$ if and
only if it is $\mathcal M$-sufficient for $\mathcal F$: the two families have
\emph{exactly the same set of $\mathcal M$-sufficient subsets}
$\mathcal S\subseteq\mathcal B_N$. Since $\mathcal S^*(\mathcal F)$ and
$\mathcal S^*(U\mathcal F)$ are obtained by the same lexicographic
minimization over identical feasible sets, they coincide, and in particular
\begin{equation}
\KM(U\mathcal F)=\KM(\mathcal F), \qquad \NM(U\mathcal F)=\NM(\mathcal F).
\end{equation}

\paragraph{(ii) Proof for the $\varepsilon$-typical quantities $\ell_{\mathcal M}^\varepsilon,n_{\mathcal M}^\varepsilon$.}
Let $(\mathcal F,\mu)$ be an ensemble and let $\mu_U$ denote the pushforward
of $\mu$ under $\ket\psi\mapsto U\ket\psi$, i.e.\ the distribution of
$U\ket\psi$ when $\ket\psi\sim\mu$ ($\mu_U(U\mathcal{F}):=\mu(U^{-1}U\mathcal{F})=Pr_{\ket{\psi} \sim \mu}[U\ket{\psi} \in U\mathcal{F}]$); this is the natural ensemble measure on
$U\mathcal F$. As shown above, for every fixed $\mathcal S\subseteq\mathcal
B_N$ the condition
$\mathrm{extr}_{\gamma\in\mathcal S}f_\gamma=\mathcal M$ holds at $U\ket\psi$
exactly when it holds at $\ket\psi$. Hence the two events have equal
probability:
\begin{equation}
\Pr_{\ket\phi\sim\mu_U}\!\Big[\underset{\gamma\in\mathcal S}{\mathrm{extr}}\,f_\gamma(\ket\phi)=\mathcal M(\ket\phi)\Big]
=
\Pr_{\ket\psi\sim\mu}\!\Big[\underset{\gamma\in\mathcal S}{\mathrm{extr}}\,f_\gamma(\ket\psi)=\mathcal M(\ket\psi)\Big],
\end{equation}
for every $\mathcal S\subseteq\mathcal B_N$. Thus $\mathcal S$ satisfies the
$\varepsilon$-typical sufficiency condition Eq.~(6) for
$(U\mathcal F,\mu_U)$ if and only if it satisfies it for $(\mathcal F,\mu)$:
the family of $\varepsilon$-typical $\mathcal M$-sufficient sets is
identical for both ensembles, for every $\varepsilon$. Consequently
$\mathcal S^{*,\varepsilon}(U\mathcal F)=\mathcal S^{*,\varepsilon}(\mathcal
F)$, and
\begin{equation}
\ell_{\mathcal M}^{\varepsilon}(U\mathcal F)=\ell_{\mathcal M}^{\varepsilon}(\mathcal F),
\qquad
n_{\mathcal M}^{\varepsilon}(U\mathcal F)=n_{\mathcal M}^{\varepsilon}(\mathcal F).
\end{equation}
In particular, if $\mu$ is itself LU-invariant (e.g.\ Haar measure), then
$\mu_U=\mu$ and $U\mathcal F=\mathcal F$, so the statement is immediate.

\subsection*{Property (b): Monotonicity under family enlargement}

\paragraph{(i) Proof for the exact quantities $\KM,\NM$.}
Let $\mathcal F\subseteq\mathcal F'$, and let
$\mathcal S_i(\mathcal F')\in\mathcal S^*(\mathcal F')$ be an optimal
sufficient set for $\mathcal F'$, so that by definition
\begin{equation}
\underset{\gamma\in\mathcal S_i(\mathcal F')}{\mathrm{extr}}\,f_\gamma(\ket\psi)=\mathcal M(\ket\psi)
\qquad \forall\,\ket\psi\in\mathcal F'.
\end{equation}
Since $\mathcal F\subseteq\mathcal F'$, this equality holds in particular for
every $\ket\psi\in\mathcal F$, so $\mathcal S_i(\mathcal F')$ is also
$\mathcal M$-sufficient for $\mathcal F$. Because
$\KM(\mathcal F)=\max_{\mathcal S:\text{ suff.\ for }\mathcal F}$
is a minimum (lexicographically) over a set that now includes
$\mathcal S_i(\mathcal F')$ as a feasible candidate,
\begin{equation}
\Big(\KM(\mathcal F),\,\NM(\mathcal F)\Big)
\ \preceq\
\Big(\max_{\gamma\in\mathcal S_i(\mathcal F')}|\gamma|,\ |\mathcal S_i(\mathcal F')|\Big)
=
\Big(\KM(\mathcal F'),\,\NM(\mathcal F')\Big),
\end{equation}
where $\preceq$ is the lexicographic order. In particular the first
coordinate gives $\KM(\mathcal F)\le \KM(\mathcal F')$; and if
$\KM(\mathcal F)=\KM(\mathcal F')$, minimality of $\mathcal S^*(\mathcal F)$
among sets with that fixed largest-bipartition size forces
$\NM(\mathcal F)\le \NM(\mathcal F')$. Hence
\begin{equation}
\KM(\mathcal F)\le\KM(\mathcal F'), \qquad \NM(\mathcal F)\le\NM(\mathcal F').
\end{equation}

\paragraph{(ii) Proof for the $\varepsilon$-typical quantities $\ell_{\mathcal M}^\varepsilon,n_{\mathcal M}^\varepsilon$.}

Let $\mathcal F\subseteq\mathcal F'$, with $\mathcal F$ equipped with the
probability measure $\mu$ and $\mathcal F'$ equipped with the probability
measure $\mu'$. We assume $\mu$ and $\mu'$ describe the \emph{same}
underlying random state-preparation process, with $\mu$ simply being that
process conditioned on the (now known) extra information that the
prepared state lies in $\mathcal F$. Precisely, for a state $\ket\psi'\sim\mu'$
drawn from the larger ensemble, and any subset $A\subseteq\mathcal F$, this
means
\begin{equation}
\Pr_{\ket\psi\sim\mu}\big[\ket\psi\in A\big]
\;=\;
\Pr_{\ket\psi'\sim\mu'}\big[\ket\psi'\in A \,\big|\, \ket\psi'\in\mathcal F\big]
\;=\;
\frac{\Pr_{\ket\psi'\sim\mu'}\big[\ket\psi'\in A\big]}{\Pr_{\ket\psi'\sim\mu'}\big[\ket\psi'\in \mathcal F\big]},
\label{eq:cond_b_prob}
\end{equation}
which is just the ordinary definition of conditional probability, applied
to the event $\{\ket\psi'\in\mathcal F\}$. Intuitively: if you draw
repeatedly from $\mu'$ but only keep the draws that happen to land in
$\mathcal F$, the resulting distribution over $\mathcal F$ \emph{is} $\mu$.

Let $\mathcal S_i(\mathcal F')\in\mathcal S^{*,\varepsilon'}(\mathcal F')$ be
an optimal set achieving $\ell_{\mathcal M}^{\varepsilon'}(\mathcal F')$, so
that by definition
\begin{equation}
\Pr_{\ket\psi'\sim\mu'}\!\left[\underset{\gamma\in\mathcal S_i(\mathcal F')}{\mathrm{extr}}\,f_\gamma(\ket\psi')\ne\mathcal M(\ket\psi')\right]\le\varepsilon'.
\label{eq:Fprime_fail}
\end{equation}

We want to bound the analogous failure probability, but now with $\ket\psi$
drawn from $\mu$ (i.e.\ restricted to $\mathcal F$) instead of from $\mu'$.
Since $\mathcal F\subseteq\mathcal F'$, the event
$$
\left\{\ket\psi\in\mathcal F : \underset{\gamma\in\mathcal S_i(\mathcal F')}{\mathrm{extr}}\,f_\gamma(\ket\psi)\ne\mathcal M(\ket\psi)\right\}
$$
is a well-defined subset $A\subseteq\mathcal F$, so Eq.~\eqref{eq:cond_b_prob}
applies directly to it:
\begin{align}
\Pr_{\ket\psi\sim\mu}\!\left[\underset{\gamma\in\mathcal S_i(\mathcal F')}{\mathrm{extr}}\,f_\gamma(\ket\psi)\ne\mathcal M(\ket\psi)\right]
&=
\frac{\displaystyle\Pr_{\ket\psi'\sim\mu'}\!\left[\ket\psi'\in\mathcal F \ \text{and}\ \underset{\gamma\in\mathcal S_i(\mathcal F')}{\mathrm{extr}}\,f_\gamma(\ket\psi')\ne\mathcal M(\ket\psi')\right]}
{\Pr_{\ket\psi'\sim\mu'}\big[\ket\psi'\in\mathcal F\big]} \label{eq:step1}\\[4pt]
&\le
\frac{\displaystyle\Pr_{\ket\psi'\sim\mu'}\!\left[\underset{\gamma\in\mathcal S_i(\mathcal F')}{\mathrm{extr}}\,f_\gamma(\ket\psi')\ne\mathcal M(\ket\psi')\right]}
{\Pr_{\ket\psi'\sim\mu'}\big[\ket\psi'\in\mathcal F\big]} \label{eq:step2}\\[4pt]
&\le
\frac{\varepsilon'}{\Pr_{\ket\psi'\sim\mu'}\big[\ket\psi'\in\mathcal F\big]}. \label{eq:step3}
\end{align}
Here, Eq.~\eqref{eq:step1} is Eq.~\eqref{eq:cond_b_prob} applied to the
failure event; Eq.~\eqref{eq:step2} holds because the event
"$\ket\psi'\in\mathcal F$ \emph{and} the extremization fails" is a
\emph{more restrictive} (smaller-or-equal-probability) event than simply
"the extremization fails," since we have dropped the requirement
$\ket\psi'\in\mathcal F$ and are now considering failure over the whole of
$\mathcal F'$; and Eq.~\eqref{eq:step3} substitutes the bound
\eqref{eq:Fprime_fail}.

Writing $p_{\mathcal F}:=\Pr_{\ket\psi'\sim\mu'}[\ket\psi'\in\mathcal F]$ for
brevity, Eq.~\eqref{eq:step3} says
\begin{equation}
\Pr_{\ket\psi\sim\mu}\!\left[\underset{\gamma\in\mathcal S_i(\mathcal F')}{\mathrm{extr}}\,f_\gamma(\ket\psi)\ne\mathcal M(\ket\psi)\right]
\le \frac{\varepsilon'}{p_{\mathcal F}},
\end{equation}
equivalently,
\begin{equation}
\Pr_{\ket\psi\sim\mu}\!\left[\underset{\gamma\in\mathcal S_i(\mathcal F')}{\mathrm{extr}}\,f_\gamma(\ket\psi)=\mathcal M(\ket\psi)\right]
\ge 1-\frac{\varepsilon'}{p_{\mathcal F}}.
\end{equation}
That is, $\mathcal S_i(\mathcal F')$ is $\big(\varepsilon'/p_{\mathcal
F}\big)$-typical $\mathcal M$-sufficient for $(\mathcal F,\mu)$, in the
sense of Definition~3.

Setting $\varepsilon := \varepsilon'/p_{\mathcal F}$ (equivalently
$\varepsilon' = \varepsilon\, p_{\mathcal F}$), the set $\mathcal
S_i(\mathcal F')$ is a \emph{feasible} candidate for the lexicographic
minimization defining $\mathcal S^{*,\varepsilon}(\mathcal F)$. By the same
lexicographic-minimality argument used in the exact case,
$\big(\ell_{\mathcal M}^{\varepsilon}(\mathcal F),\,n_{\mathcal
M}^{\varepsilon}(\mathcal F)\big)$ can only be smaller-or-equal (in the
lexicographic order) to the value attained by this candidate, giving

\begin{equation}
\Big(\KM^{\epsilon}(\mathcal F),\,\NM^{\epsilon}(\mathcal F)\Big)
\ \preceq\
\Big(\KM^{\epsilon p_{\mathcal{F}}}(\mathcal F'),\,\NM^{\epsilon p_{\mathcal{F}}}(\mathcal F')\Big),
\label{eq:final_b_typical}
\end{equation}

and in particular the first coordinate gives $\KM^{\epsilon}(\mathcal F) \preceq \KM^{\epsilon p_{\mathcal{F}}}(\mathcal F')$ and if $\KM^{\epsilon}(\mathcal F) = \KM^{\epsilon p_{\mathcal{F}}}(\mathcal F')$ then, \(
n_{\mathcal M}^{\varepsilon}(\mathcal F)\;\le\;n_{\mathcal M}^{\varepsilon\,p_{\mathcal F}}(\mathcal F'),
\quad\text{where } p_{\mathcal F} := \Pr_{\ket\psi'\sim\mu'}\big[\ket\psi'\in\mathcal F\big].
\)

\medskip
\noindent\textbf{Special cases.}
\begin{itemize}
\item As $\varepsilon\to0$, Eq.~\eqref{eq:final_b_typical} recovers the
exact-case statement $\KM(\mathcal F)\le\KM(\mathcal F')$,
$\NM(\mathcal F)\le\NM(\mathcal F')$.
\item If $p_{\mathcal F}=\Pr_{\ket\psi'\sim\mu'}[\ket\psi'\in\mathcal F]=1$
(i.e.\ $\mathcal F'\setminus\mathcal F$ carries no probability mass under
$\mu'$, so $\mu=\mu'$ for all practical purposes), Eq.~\eqref{eq:final_b_typical}
reduces to the clean same-$\varepsilon$ statement.
\end{itemize}

\medskip
\noindent\textbf{Why the rescaling by $p_{\mathcal F}$ is necessary.}
The clean same-$\varepsilon$ statement
$\ell_{\mathcal M}^{\varepsilon}(\mathcal F)\le\ell_{\mathcal M}^{\varepsilon}(\mathcal F')$
by direct analogy with the exact monotonicity property is false
in general: Eq.~\eqref{eq:final_b_typical} shows that the tighter
tolerance $\varepsilon\,p_{\mathcal F}$ on $\mathcal F'$ is what is actually
needed to guarantee tolerance $\varepsilon$ on the sub-ensemble $\mathcal F$,
and this rescaling cannot be removed. The reason is that
$\varepsilon'$-typical sufficiency for $\mathcal F'$ only bounds the
\emph{total} failure probability over $\mathcal F'$; it gives no control over
\emph{where} that failure mass is distributed. In the worst case, all of the
$\varepsilon'$ allowed failure probability could be concentrated entirely on
the rare sub-ensemble $\mathcal F$ (e.g.\ if $\mathcal F'\setminus\mathcal F$
consists of states on which the optimal set for $\mathcal F'$ always
succeeds, while it fails on every state of $\mathcal F$), in which case the
restricted failure probability on $\mathcal F$ equals $\varepsilon'/p_{\mathcal
F}$, saturating the bound in Eq.~\eqref{eq:step3}. Only when $p_{\mathcal
F}=1$ -- i.e.\ $\mathcal F'\setminus\mathcal F$ carries no probability mass
under $\mu'$, so there is no "elsewhere" for failures to accumulate -- does
the bound collapse to the clean same-$\varepsilon$ statement. In the
deterministic limit $\varepsilon\to0$, however, this subtlety disappears:
zero failure probability on $\mathcal F'$ forces zero failure probability on
every subset, including $\mathcal F$, recovering the exact monotonicity
property with no rescaling.

\subsection*{Property (c): Union bound}

\paragraph{(i) Proof for the exact quantity $\KM$ (and $\NM$).}
Let $\mathcal S_i(\mathcal F_1)\in\mathcal S^*(\mathcal F_1)$ and
$\mathcal S_j(\mathcal F_2)\in\mathcal S^*(\mathcal F_2)$ be optimal
sufficient sets, so
\begin{equation}
\underset{\gamma\in\mathcal S_i(\mathcal F_1)}{\mathrm{extr}}\,f_\gamma(\ket\psi)=\mathcal M(\ket\psi)\ \ \forall\,\ket\psi\in\mathcal F_1,
\qquad
\underset{\gamma\in\mathcal S_j(\mathcal F_2)}{\mathrm{extr}}\,f_\gamma(\ket\psi)=\mathcal M(\ket\psi)\ \ \forall\,\ket\psi\in\mathcal F_2.
\end{equation}
Set $\mathcal S:=\mathcal S_i(\mathcal F_1)\cup \mathcal S_j(\mathcal F_2)$.
Since $\mathcal S_i(\mathcal F_1)\subseteq\mathcal S$, Lemma~\ref{lem:mono}
gives $\mathrm{extr}_{\gamma\in\mathcal S}f_\gamma(\ket\psi)=\mathcal
M(\ket\psi)$ for every $\ket\psi\in\mathcal F_1$; likewise using
$\mathcal S_j(\mathcal F_2)\subseteq\mathcal S$, the same holds for every
$\ket\psi\in\mathcal F_2$. Hence $\mathcal S$ is $\mathcal M$-sufficient for
$\mathcal F_1\cup\mathcal F_2$, with
\begin{equation}
\max_{\gamma\in\mathcal S}|\gamma|
=\max\Big\{\max_{\gamma\in\mathcal S_i(\mathcal F_1)}|\gamma|,\ \max_{\gamma\in\mathcal S_j(\mathcal F_2)}|\gamma|\Big\}
=\max\{\KM(\mathcal F_1),\KM(\mathcal F_2)\}.
\end{equation}
Since $\KM(\mathcal F_1\cup\mathcal F_2)$ is by definition the minimal
achievable value of $\max_{\gamma\in\mathcal S}|\gamma|$ over $\mathcal
M$-sufficient sets for $\mathcal F_1\cup\mathcal F_2$, and $\mathcal S$ is
one explicit feasible choice,
\begin{equation}
\KM(\mathcal F_1\cup\mathcal F_2)\le \max\{\KM(\mathcal F_1),\KM(\mathcal F_2)\}.
\end{equation}

\paragraph{(ii) Proof for the $\varepsilon$-typical quantities $\ell_{\mathcal M}^\varepsilon,n_{\mathcal M}^\varepsilon$.}

Let $\mathcal F_1,\mathcal F_2$ be equipped with probability measures
$\mu_1,\mu_2$. Equip $\mathcal F_1\cup\mathcal F_2$ with the measure $\mu$
defined by the two-stage process: with probability $p\in[0,1]$ sample
$\ket\psi\sim\mu_1$, and with probability $1-p$ sample $\ket\psi\sim\mu_2$,
so that for any $A\subseteq\mathcal F_1\cup\mathcal F_2$,
\begin{equation}
\Pr_{\ket\psi\sim\mu}\big[\ket\psi\in A\big]
=
p\,\Pr_{\ket\psi\sim\mu_1}\big[\ket\psi\in A\cap\mathcal F_1\big]
+(1-p)\,\Pr_{\ket\psi\sim\mu_2}\big[\ket\psi\in A\cap\mathcal F_2\big].
\label{eq:mix_prob2}
\end{equation}

Let $\mathcal S_i(\mathcal F_1)\in\mathcal S^{*,\varepsilon}(\mathcal F_1)$
and $\mathcal S_j(\mathcal F_2)\in\mathcal S^{*,\varepsilon}(\mathcal F_2)$ be
optimal $\varepsilon$-typical sufficient sets, so
\begin{equation}
\Pr_{\ket\psi\sim\mu_1}\!\Big[\underset{\gamma\in\mathcal S_i(\mathcal F_1)}{\mathrm{extr}}f_\gamma(\ket\psi)=\mathcal M(\ket\psi)\Big]\ge1-\varepsilon,
\qquad
\Pr_{\ket\psi\sim\mu_2}\!\Big[\underset{\gamma\in\mathcal S_j(\mathcal F_2)}{\mathrm{extr}}f_\gamma(\ket\psi)=\mathcal M(\ket\psi)\Big]\ge1-\varepsilon.
\end{equation}
Set $\mathcal S:=\mathcal S_i(\mathcal F_1)\cup\mathcal S_j(\mathcal F_2)$. By
Lemma~\ref{lem:mono} (using $\mathcal S_i(\mathcal F_1)\subseteq\mathcal S$
and $\mathcal S_j(\mathcal F_2)\subseteq\mathcal S$),
\begin{equation}
\Pr_{\ket\psi\sim\mu_1}\!\Big[\underset{\gamma\in\mathcal S}{\mathrm{extr}}f_\gamma(\ket\psi)=\mathcal M(\ket\psi)\Big]\ge1-\varepsilon,
\qquad
\Pr_{\ket\psi\sim\mu_2}\!\Big[\underset{\gamma\in\mathcal S}{\mathrm{extr}}f_\gamma(\ket\psi)=\mathcal M(\ket\psi)\Big]\ge1-\varepsilon.
\end{equation}
Applying \eqref{eq:mix_prob2} to the success event of $\mathcal S$,
\begin{align}
\Pr_{\ket\psi\sim\mu}\!\Big[\underset{\gamma\in\mathcal S}{\mathrm{extr}}f_\gamma(\ket\psi)=\mathcal M(\ket\psi)\Big]
&= p\,\Pr_{\ket\psi\sim\mu_1}\!\Big[\underset{\gamma\in\mathcal S}{\mathrm{extr}}f_\gamma(\ket\psi)=\mathcal M(\ket\psi)\Big]
+(1-p)\,\Pr_{\ket\psi\sim\mu_2}\!\Big[\underset{\gamma\in\mathcal S}{\mathrm{extr}}f_\gamma(\ket\psi)=\mathcal M(\ket\psi)\Big] \nonumber\\
&\ge p(1-\varepsilon)+(1-p)(1-\varepsilon) = 1-\varepsilon,
\end{align}
for \emph{every} $p\in[0,1]$. Hence $\mathcal S$ is $\varepsilon$-typical
$\mathcal M$-sufficient for $(\mathcal F_1\cup\mathcal F_2,\mu)$, with
\begin{equation}
\max_{\gamma\in\mathcal S}|\gamma| = \max\big\{\ell_{\mathcal M}^\varepsilon(\mathcal F_1),\ell_{\mathcal M}^\varepsilon(\mathcal F_2)\big\},
\end{equation}
and since $\ell_{\mathcal M}^\varepsilon(\mathcal F_1\cup\mathcal F_2)$ is by
definition the smallest achievable value of $\max_{\gamma\in\mathcal
S}|\gamma|$ over $\varepsilon$-typical sufficient sets, and $\mathcal S$ is
one explicit feasible choice,
\begin{equation}
\ell_{\mathcal M}^{\varepsilon}(\mathcal F_1\cup\mathcal F_2)\le \max\big\{\ell_{\mathcal M}^{\varepsilon}(\mathcal F_1),\ell_{\mathcal M}^{\varepsilon}(\mathcal F_2)\big\}.
\end{equation}
This holds for \emph{every} mixing weight $p\in[0,1]$ and for \emph{any}
relationship between $\mathcal F_1$ and $\mathcal F_2$ (including
overlapping supports), since $\mu$ is constructed directly from $\mu_1,\mu_2$
as independent sampling sources rather than derived from a common ambient
measure.

\section{3. GEL and GEN beyond bipartition-extremization: marginal reconstructibility}
\label{subsec:reconstructibility}

Not every physically relevant GME measure is naturally
written in the bipartition-extremized form of Eq.~(1). We
now introduce a parallel notion -- \emph{marginal reconstructibility} --
that extends GEL and GEN to measures of this broader type.

\textbf{Definition 4.} (Marginally reconstructible GME measure)
\textit{Let $\mathcal M(\rho)$ be a GME measure and let $\mathcal F$ be a family of
$N$-qubit states (pure or mixed). For a subset $A\subseteq\{1,\dots,N\}$,
write $\rho_A$ for the corresponding reduced state, and let
$\mathcal P_N := \{A : A\subseteq\{1,\dots,N\},\ A\neq\varnothing\}$ denote
the set of all nonempty qubit subsets, with $|A|$ ranging from $1$ to $N$.
A subset $A\in\mathcal P_N$ \emph{reconstructs} $\mathcal M$ on a set of
states $\mathcal G\subseteq\mathcal F$ if there exists a function $h_A$,
depending only on $\rho_A$, such that
\begin{equation}
\mathcal M(\rho) = h_A(\rho_A), \qquad \forall\,\rho\in\mathcal G.
\label{eq:reconstruct}
\end{equation}
A subset $\mathcal S\subseteq\mathcal P_N$ is \emph{$\mathcal M$-sufficient
for $\mathcal F$ in the reconstructive sense} if, for every $\rho\in
\mathcal F$, there exists $A\in\mathcal S$ that reconstructs $\mathcal M$
on some $\mathcal G\subseteq\mathcal F$ with $\rho\in\mathcal G$ -- i.e.\
$\mathcal S$, together with a fixed choice of $\{h_A\}_{A\in\mathcal S}$,
covers $\mathcal F$. The set $\mathcal S = \{\{1,\dots,N\}\}$, with
$h_{\{1,\dots,N\}}:=\mathcal M$, is always $\mathcal M$-sufficient for any
$\mathcal F$, so a sufficient set exists trivially, exactly as
$\mathcal S=\mathcal B_N$ does in the bipartition-extremized case.}
\textbf{Definition 5.} (GME Evaluation Length and Number for marginally reconstructible measures)
\textit{Let $\mathcal M(\rho)$ be a marginally reconstructible GME measure and $\mathcal F$ a family of
$N$-qubit states. Among all $\mathcal M$-sufficient sets $\mathcal
S\subseteq\mathcal P_N$ for $\mathcal F$ in the sense of
Definition~4, the \emph{minimally demanding
sufficient sets} minimize the size of the largest subset they contain and,
subject to that, their cardinality:
\begin{equation}
\mathcal{S}^{*}(\mathcal{F})
=
\argmin_{\mathcal{S}\,:\,\mathcal{M}\text{-sufficient for }\mathcal{F}}
\left(
\max_{A\in\mathcal{S}} |A|,
\, |\mathcal{S}|
\right),
\label{eq:min_dem_suf_set_recon}
\end{equation}
with the minimization lexicographic as before. The GEL and GEN for
$\mathcal F$ are then
\begin{equation}
  \KM(\mathcal{F}) = \max_{A \in \mathcal{S}_i(\mathcal{F})}|A|, \qquad
  \NM(\mathcal{F}) = |\mathcal{S}_i(\mathcal{F})|,
  \qquad \mathcal S_i(\mathcal F)\in\mathcal S^*(\mathcal F).
\end{equation}
Since a sufficient set always exists,
$\KM(\mathcal F)$ and $\NM(\mathcal F)$ are always well-defined, with range
\begin{equation}
1\leq \KM(\mathcal{F}) \leq N, \qquad
1\leq \NM(\mathcal{F})\leq \sum_{k=1}^{\KM(\mathcal{F})}\binom{N}{k}\leq (2^{N}-1).
\end{equation}}

The upper bound on $\KM$ is now $N$, rather than $\lfloor N/2\rfloor$ as in
the bipartition-extremized case: unlike an extremization over $\mathcal
B_N$, which by construction never requires more than a proper subsystem,
reconstructibility of a general $\mathcal M$ may genuinely require the full
system, $\KM(\mathcal F)=N$, when no proper marginal determines $\mathcal
M(\rho)$ for every $\rho\in\mathcal F$ -- for instance, whenever $\mathcal F$
contains states that agree on every proper marginal but differ in
$\mathcal M$. The operational interpretation of the extremal values is
otherwise unchanged: $\KM(\mathcal F)=\NM(\mathcal F)=1$ means a single
single-qubit marginal already reconstructs $\mathcal M$ for every state in
$\mathcal F$; $\KM(\mathcal F)=N$ signals that $\mathcal M$, for this
family, is a genuinely global (non-marginal-reconstructible-below-$N$)
property, requiring full-state tomography.

Definitions~1 and 4 coincide
whenever both apply: if $\mathcal M$ is bipartition-extremized as in
Eq.~(1) and the extremizing bipartition $\gamma^*$ is the
same for every $\rho\in\mathcal F$, then $\gamma^*$ is trivially
reconstructing with $h_{\gamma^*}:=f_{\gamma^*}$, so the two definitions
agree and yield identical $\KM(\mathcal F),\NM(\mathcal F)$. Marginal
reconstructibility should thus be read as a strict generalization of the
sufficiency criterion underlying GEL and GEN and not as a distinct quantity, extending it to measures with no natural bipartition-extremized form and to mixed-state families.

\subsubsection{Example: amplitude-damped generalized GHZ states}

Consider the $N$-qubit generalized GHZ state,
\begin{equation}
\ket{g\mathrm{GHZ}_N} = a\ket0^{\otimes N} + b\ket1^{\otimes N}, \qquad |a|^2+|b|^2=1,
\end{equation}
subjected to single-qubit amplitude damping on qubit $1$ with damping
parameter $\gamma\in[0,1]$, described by Kraus operators
\begin{equation}
K_0 = \begin{pmatrix}1 & 0\\ 0 & \sqrt{1-\gamma}\end{pmatrix}, \qquad
K_1 = \begin{pmatrix}0 & \sqrt\gamma\\ 0 & 0\end{pmatrix}.
\end{equation}
This yields the family
\begin{equation}
\mathcal F_{\rm ad} = \Big\{\rho(a,\gamma) : |a|^2\in[0,1],\ \gamma\in[0,1]\Big\},
\end{equation}
with
\begin{equation}
\rho(a,\gamma) = |a|^2\ket{0^N}\!\bra{0^N} + (1-\gamma)|b|^2\ket{1^N}\!\bra{1^N}
+\gamma|b|^2\ket{01^{N-1}}\!\bra{01^{N-1}}
+\sqrt{1-\gamma}\big(ab^*\ket{0^N}\!\bra{1^N}+\mathrm{h.c.}\big).
\end{equation}
Since $\rho(a,\gamma)$ is an $N$-qubit X-state, its Genuine Multipartite
Concurrence is given exactly, via the formula of Ma \emph{et al.} (proven
for X-states by Hashemi Rafsanjani \emph{et al.}), by
\begin{equation}
C_g\big(\rho(a,\gamma)\big) = 2|ab|\sqrt{1-\gamma}.
\label{eq:CG_exact}
\end{equation}
We now determine $\KM(\mathcal F_{\rm ad})$ and $\NM(\mathcal F_{\rm ad})$
for $\mathcal M=C_g$.

\paragraph{No single-qubit bipartition is reconstructing.}
Tracing out all but the damped qubit gives
\begin{equation}
\rho_1 = \begin{pmatrix} 1-q & 0 \\ 0 & q\end{pmatrix}, \qquad q = (1-\gamma)|b|^2,
\end{equation}
which depends on $(a,\gamma)$ only through the single combination $q$.
Since $C_g$ in Eq.~\eqref{eq:CG_exact} depends on $|a|^2$ and $\gamma$
\emph{separately} rather than only through $q$, no function $h_{\gamma_1}$
of $\rho_1$ alone can reproduce $C_g$ for all $\rho\in\mathcal F_{\rm ad}$.
Explicitly, the pairs $(|a|^2,\gamma)=(0.8,0)$ and $(0.6,0.5)$ both give
$q=0.2$, hence identical $\rho_1=\mathrm{diag}(0.8,0.2)$, yet
\begin{equation}
C_g(0.8,0) = 2\sqrt{0.8\times0.2} = 0.8,
\qquad
C_g(0.6,0.5) = 2\sqrt{0.6\times0.4\times0.5} \approx 0.693,
\end{equation}
two different values -- so no reconstruction function $h_{\gamma_1}$ can
exist. For any undamped qubit $i\neq1$, the reduced state
$\rho_i=\mathrm{diag}(|a|^2,|b|^2)$ is entirely independent of $\gamma$,
while $C_g$ depends explicitly on $\gamma$ through $\sqrt{1-\gamma}$, so
$\rho_i$ fails even more directly. Hence no bipartition of size $1$ is
reconstructing for $\mathcal F_{\rm ad}$, and $\KM(\mathcal F_{\rm ad})>1$.

\paragraph{The two-qubit bipartition $\{1,2\}$ is reconstructing.}
Tracing out all but qubits $1,2$ gives
\begin{equation}
\rho_{12} = |a|^2\ket{00}\!\bra{00} + \gamma|b|^2\ket{01}\!\bra{01} + (1-\gamma)|b|^2\ket{11}\!\bra{11},
\end{equation}
with diagonal populations
\begin{equation}
P_{00}=|a|^2, \qquad P_{01}=\gamma|b|^2, \qquad P_{11}=(1-\gamma)|b|^2.
\end{equation}
Since $P_{00}+P_{01}+P_{11}=1$, these three populations carry two
independent parameters, exactly matching $(|a|^2,\gamma)$; inverting,
\begin{equation}
|a|^2 = P_{00}, \qquad |b|^2 = P_{01}+P_{11}, \qquad 1-\gamma = \frac{P_{11}}{P_{01}+P_{11}}.
\end{equation}
Substituting into Eq.~\eqref{eq:CG_exact},
\begin{equation}
C_g = 2\sqrt{P_{00}(P_{01}+P_{11})}\,\sqrt{\frac{P_{11}}{P_{01}+P_{11}}}
= 2\sqrt{P_{00}\,P_{11}},
\end{equation}
i.e.
\begin{equation}
C_g\big(\rho(a,\gamma)\big) = 2\sqrt{\rho_{00,00}\,\rho_{11,11}}
=: h_{\{1,2\}}(\rho_{12}), \qquad \forall\,\rho\in\mathcal F_{\rm ad},
\label{eq:CG_from_rho12}
\end{equation}
where $\rho_{00,00},\rho_{11,11}$ denote the corresponding diagonal
elements of $\rho_{12}$. Equation~\eqref{eq:CG_from_rho12} is precisely the
reconstruction condition~\eqref{eq:reconstruct} for $\gamma=\{1,2\}$, so
$\{1,2\}$ is a reconstructing bipartition, and the singleton set
$\mathcal S=\{\{1,2\}\}$ is $C_g$-sufficient for $\mathcal F_{\rm ad}$ in
the sense of Definition~4.

\paragraph{GEL and GEN.}
Combining the two results, the minimally demanding sufficient set for
$\mathcal F_{\rm ad}$ uses a bipartition of size $2$ and requires only one
such bipartition, giving
\begin{equation}
\KM(\mathcal F_{\rm ad}) = 2, \qquad \NM(\mathcal F_{\rm ad}) = 1.
\end{equation}
This example illustrates that even when $\mathcal M$ is not posed in the
bipartition-extremized form of Eq.~(1), GEL and GEN remain
well-defined and answer the same operational question: a symmetric pure
gGHZ family requires only single-qubit measurements ($\KM=1$), but
asymmetric local noise (damping on one identified qubit) raises the
requirement to a specific two-qubit joint measurement ($\KM=2$), even
though no single-qubit measurement, on any qubit, suffices.

\section{4. Proof of Proposition 1}

\textit{Proposition 1.} ~ Let $\ket\psi\sim\mu_{\mathrm Haar}$ be drawn from the Haar measure on $N$
qubits, and for $\gamma\in\mathcal B_N$ with size $|\gamma|$ let
\(
f_\gamma(\ket\psi) = C_{\gamma}(\ket\psi) = \sqrt{2\big(1-\Tr\rho_A^2\big)}
\)
denote the bipartite concurrence across the cut $\gamma$. There
exists a sequence $\varepsilon_N\to0$, decaying exponentially in $N$ ($\varepsilon_N\sim \mathcal{O}(2^{-N})$), such
that
\begin{equation}
\Pr_{\ket\psi\sim\mu_{\mathrm H}}\!\left[
\underset{\gamma\,:\,|\gamma|=1}{\mathrm{min}}\ C_{\gamma}(\psi) = \ C_{g}(\psi)
\right]
\;\ge\; 1-\varepsilon_N,
\label{eq:prop1}
\end{equation}
i.e.\ with probability at least $1-\varepsilon_N$ the minimization of
$C_\gamma(\psi)$ over $\mathcal B_N$ is attained already at bipartition
size $|\gamma|=1$.

\vspace{0.5cm}
To prove this we need to show $\Pr[P_{\gamma_1} > P_{\gamma_k}] \geq
1 - O(2^{-N})$ for all fixed $k>1$, where $\gamma_1$ denotes any
bipartition of size $1$ and $\gamma_k$ any bipartition of size $k$, with
means $\overline P_1,\overline P_k$ and variances $\sigma_1^2,\sigma_k^2$
as defined previously. For this we will use the following lemma.

\begin{lemma}
\label{lem:chebyshev}
Let $X$ and $Y$ be real random variables with means $m_X > m_Y$ and
variances $\sigma_X^2$ and $\sigma_Y^2$, respectively. Then
\begin{equation}
  \Pr[X > Y]
  \;\geq\;
  1 - \frac{4(\sigma_X^2 + \sigma_Y^2)}{(m_X - m_Y)^2}.
  \label{eq:lemma}
\end{equation}
\end{lemma}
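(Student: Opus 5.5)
The plan is to reduce the event $\{X\le Y\}$ to a deviation of one of the two variables from its own mean by at least half the gap, and then apply Chebyshev's inequality to each variable separately. Set $\Delta:=m_X-m_Y>0$. The midpoint $c:=(m_X+m_Y)/2$ sits at distance $\Delta/2$ from both means. If $X\le Y$, then either $X\le c$ or $Y\ge c$, because otherwise $X>c>Y$. The first case forces $X-m_X\le-\Delta/2$, and the second forces $Y-m_Y\ge\Delta/2$. This gives the event inclusion
\begin{equation*}
\{X\le Y\}\subseteq\{|X-m_X|\ge\Delta/2\}\cup\{|Y-m_Y|\ge\Delta/2\}.
\end{equation*}

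Next we apply the union bound together with Chebyshev's inequality $\Pr[|Z-m_Z|\ge t]\le\sigma_Z^2/t^2$ at $t=\Delta/2$. This yields
\begin{equation*}
\Pr[X\le Y]\le\frac{4\sigma_X^2}{\Delta^2}+\frac{4\sigma_Y^2}{\Delta^2}.
\end{equation*}
Taking complements, $\Pr[X>Y]=1-\Pr[X\le Y]$, gives exactly the bound in Lemma~\ref{lem:chebyshev}.

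The argument uses no independence or joint distribution of $X$ and $Y$, only finiteness of their variances. This matters because in the application, the purities $P_{\gamma_1}$ and $P_{\gamma_k}$ are strongly correlated functions of the same Haar-random state.

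There is no real obstacle here. The only point needing care is the boundary case $X=Y=c$. It is absorbed because both Chebyshev events use non-strict inequalities $\ge$, so the inclusion remains valid.

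An alternative route would apply Chebyshev to $X-Y$ with variance bounded by $(\sigma_X+\sigma_Y)^2\le2(\sigma_X^2+\sigma_Y^2)$, which even gives a better constant. However, the midpoint-splitting argument reproduces the stated constant $4$ directly, so I would present that one.
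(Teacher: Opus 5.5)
Your proof is correct and is essentially the paper's argument: split at the midpoint $M=(m_X+m_Y)/2$, note that $X\le Y$ forces $X\le M$ or $Y\ge M$, bound each tail by Chebyshev at distance $(m_X-m_Y)/2$, and take the complement. Your side remark is also correct: applying Chebyshev directly to $X-Y$ gives the sharper constant $2$, since $\mathrm{Var}(X-Y)\le 2(\sigma_X^2+\sigma_Y^2)$.
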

\begin{proof}
The idea is simple: if $X$ and $Y$ have well-separated means, then the event
$X \le Y$ can only occur when at least one of the two variables deviates
substantially from its own mean. We now make this precise.

Define the midpoint
\begin{equation}
M=\frac{m_X+m_Y}{2}.
\end{equation}
If $X \le Y$, then it cannot be the case that both $X > M$ and $Y < M$.
Therefore, whenever $X \le Y$ occurs, at least one of the following two
events must also occur: $X \le M$ or $Y \ge M$. Hence,
\begin{equation}
\Pr[X \le Y] \le \Pr[X \le M] + \Pr[Y \ge M].
\end{equation}
We bound these two probabilities separately using Chebyshev's inequality.
Since
\begin{equation}
M-m_X = -\frac{m_X-m_Y}{2},
\end{equation}
the event $X \le M$ implies that $X$ deviates from its mean by at least
$(m_X-m_Y)/2$. Thus,
\begin{align}
\Pr[X \le M]
&= \Pr\!\left(X-m_X \le -\frac{m_X-m_Y}{2}\right) \nonumber\\
&\le \Pr\!\left(\bigl|X-m_X\bigr| \ge \frac{m_X-m_Y}{2}\right) \nonumber\\
&\le \frac{4\sigma_X^2}{(m_X-m_Y)^2}.
\end{align}
Similarly, since
\begin{equation}
M-m_Y = \frac{m_X-m_Y}{2},
\end{equation}
the event $Y \ge M$ implies that $Y$ deviates from its mean by at least
$(m_X-m_Y)/2$, and therefore
\begin{equation}
\Pr[Y \ge M]
\le \Pr\!\left(\bigl|Y-m_Y\bigr| \ge \frac{m_X-m_Y}{2}\right)
\le \frac{4\sigma_Y^2}{(m_X-m_Y)^2}.
\end{equation}
Combining the two estimates gives
\begin{equation}
\Pr[X \le Y]
\le \frac{4(\sigma_X^2+\sigma_Y^2)}{(m_X-m_Y)^2}.
\end{equation}
Taking the complement yields
\begin{equation}
\Pr[X > Y]
\ge 1 - \frac{4(\sigma_X^2+\sigma_Y^2)}{(m_X-m_Y)^2},
\end{equation}
which is the claimed bound.
\end{proof}


\begin{figure}
    \centering
    \includegraphics[width=0.65\linewidth]{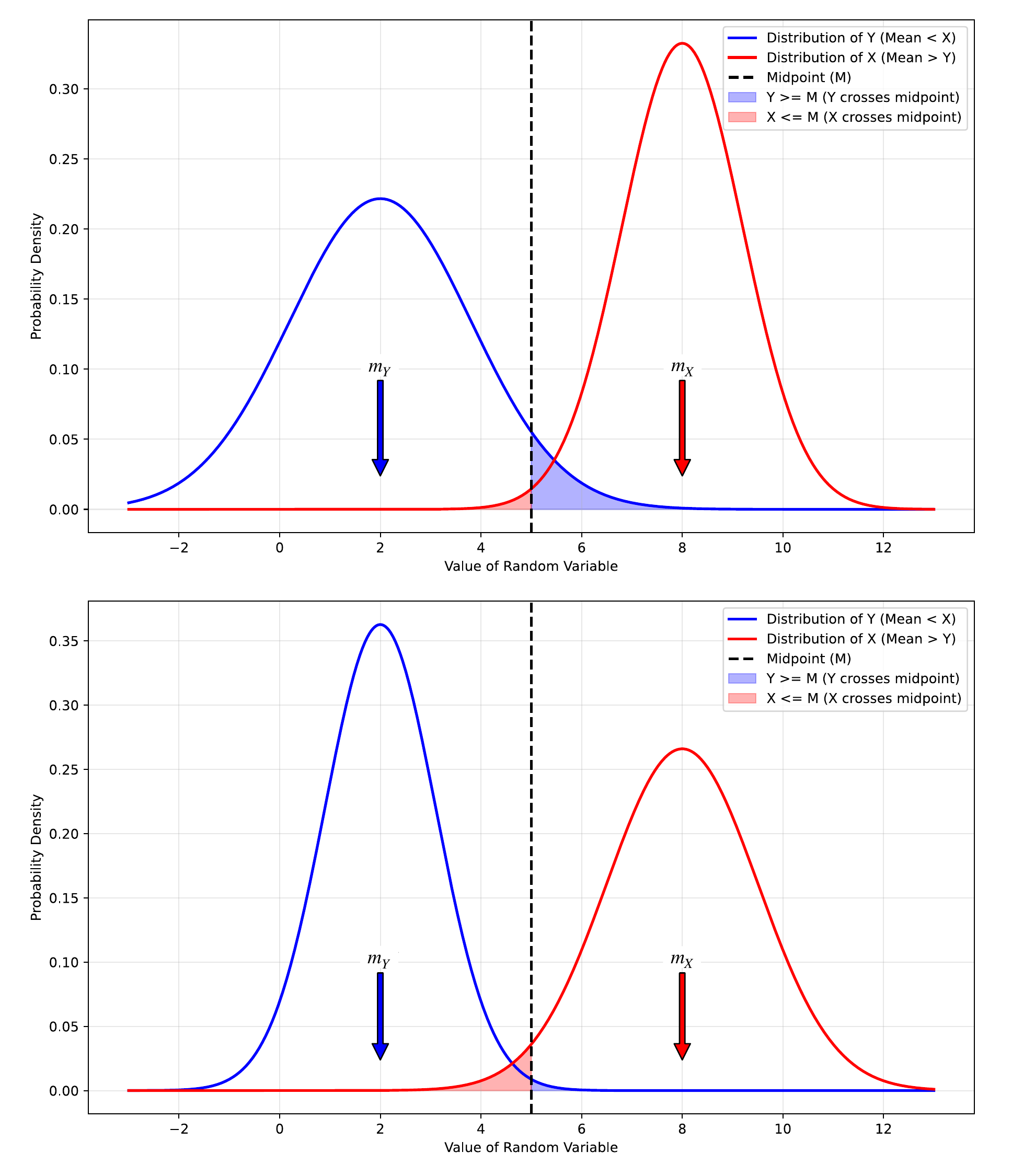}
    \caption{\textbf{Visualization of the Upper Bound for $\Pr[X \le Y]$.} The solid curves represent the probability density functions for two random variables, $Y$ (left) and $X$ (right), with respective means $m_Y$ and $m_X$ such that $m_X > m_Y$. The dashed vertical line denotes the midpoint barrier, $M = (m_X + m_Y)/2$. For the event $X \le Y$ to occur, a necessary condition is that at least one of the variables must cross this midpoint barrier. The shaded regions illustrate these failure modes: the shaded area under $Y$ represents $\Pr[Y \ge M]$, and the shaded area under $X$ represents $\Pr[X \le M]$. Bounding the probability of $X \le Y$ by the sum of these two shaded tail areas, $\Pr[X \le M] + \Pr[Y \ge M]$, establishes an upper bound. It is an upper bound, rather than an exact probability, because it inherently over counts the joint parameter space where the variable cross the midpoint barrier but do not satisfy  $X \le Y$. This midpoint-barrier construction enables the application of Chebyshev's inequality, allowing the resulting bound to be expressed solely in terms of the mean and variance of the associated random variable.}
    \label{fig:midpoint_upper_bound}
\end{figure}



\subsection{Proof of Proposition~1}
The proof follows by showing that, for fixed subsystem size $k>1$, the
single-qubit purity has a strictly larger mean than $P_{\gamma_k}$ while
both fluctuations are exponentially suppressed in $N$.

\begin{proof}
We apply Lemma~\ref{lem:chebyshev} to the subsystem purities
$P_{\gamma_k} := \Tr\rho_A^2$ for an $N$-qubit Haar-random pure state
$\ket\psi\sim\mu_{\mathrm H}$, with $|\gamma_k|=k$, $d_A=2^k$, $d_B=2^{N-k}$,
and mean and variance $\overline P_k := \mathbb E[P_{\gamma_k}]$,
$\sigma_k^2 := \mathrm{Var}(P_{\gamma_k})$, as introduced above. For any
$k>1$, taking $X=P_{\gamma_1}$ and $Y=P_{\gamma_k}$ in
Lemma~\ref{lem:chebyshev} gives
\begin{equation}
\Pr\big[P_{\gamma_1} > P_{\gamma_k}\big] \;\geq\; 1 - \frac{4(\sigma_1^2 + \sigma_k^2)}{(\overline P_1 - \overline P_k)^2}.
\label{eq:puritydom}
\end{equation}
To establish that this bound approaches unity as $N$ grows, we need to show
that the ratio $(\sigma_1^2+\sigma_k^2)/(\overline P_1-\overline P_k)^2$
vanishes with increasing system size. We verify this by computing each
factor explicitly.

\smallskip
\noindent\textit{Mean separation.}
\begin{equation}
\overline P_1 = \frac{2 + 2^{N-1}}{2^N + 1},
\qquad
\overline P_k = \frac{2^k + 2^{N-k}}{2^N + 1}.
\end{equation}
Their difference is
\begin{equation}
\overline P_1 - \overline P_k = \frac{2^{N-1} - 2^{N-k} + 2 - 2^k}{2^N + 1}.
\end{equation}
For fixed $k\geq2$ and large $N$,
\begin{equation}
\overline P_1 - \overline P_k \approx \frac{2^{N-k}(2^{k-1}-1)}{2^N} = \frac{2^{k-1}-1}{2^k} = O(1).
\end{equation}

\smallskip
\noindent\textit{Variance suppression.}
\begin{equation}
\sigma_1^2
=
\frac{
2(2^2-1)(2^{2(N-1)}-1)
}{
(2^N+1)^2(2^N+2)(2^N+3)
}
\approx \frac{6}{2^{2N+2}}.
\end{equation}

For a fixed $k$-qubit subsystem, $m=2^k$ and $n=2^{N-k}$, giving
\begin{equation}
\sigma_k^2
=
\frac{
2(2^{2k}-1)(2^{2(N-k)}-1)
}{
(2^N+1)^2(2^N+2)(2^N+3)
}
\approx \frac{2}{2^{2N}}.
\end{equation}

Thus,
\begin{equation}
\sigma_1^2+\sigma_k^2
=
O(2^{-2N}).
\end{equation}

\smallskip
\noindent\textit{Combining the bounds.}
Since $(\overline P_1 - \overline P_k)^2 = O(1)$ and
$\sigma_1^2+\sigma_k^2 = O(2^{-2N})$,
\begin{equation}
\frac{4(\sigma_1^2 + \sigma_k^2)}{(\overline P_1 - \overline P_k)^2} = O(2^{-2N}),
\end{equation}
and therefore $\Pr[P_{\gamma_1} > P_{\gamma_k}] \geq 1 - O(2^{-2N})$ for all
fixed $k>1$. While $P_{\gamma_1}$ and $P_{\gamma_k}$ are correlated, the
union bound applied to the midpoint barrier in the proof of
Lemma~\ref{lem:chebyshev} circumvents the need for independence, and the
suppression of the variance guarantees the bound remains tight. Since
$C_\gamma$ is a monotonically decreasing function of $P_\gamma$, this
directly establishes $\Pr[C_{\gamma_1} < C_{\gamma_k}] \geq 1-O(2^{-2N})$
for all fixed $k>1$; a union bound over the (at most $2^{N-1}$) bipartitions $\gamma'\in\mathcal
B_N$ with $|\gamma'|>1$ then yields Eq.~\eqref{eq:prop1} with
$\varepsilon_N = O(2^{N-1})\cdot O(2^{-2N}) = O(2^{-N})$.
\end{proof}

\begin{figure}
    \centering
    \includegraphics[width=0.85\linewidth]{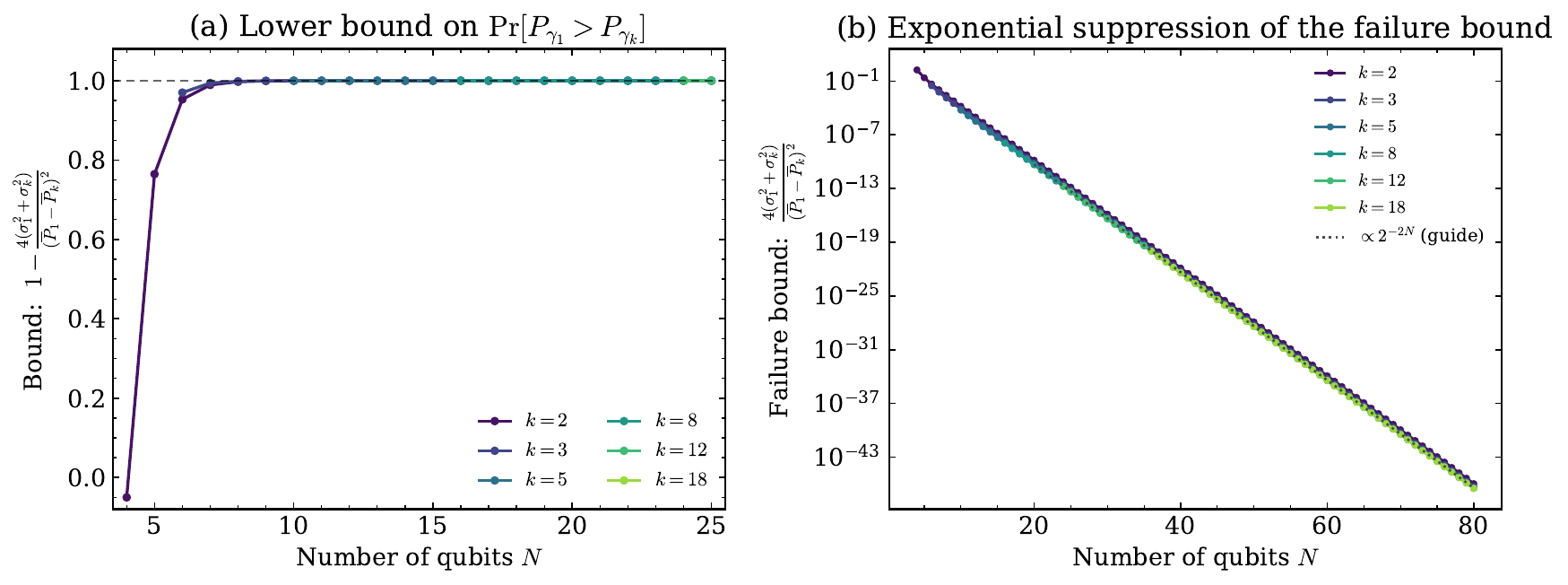}
    \caption{Chebyshev bound on single-qubit purity domination for Haar-random
$N$-qubit pure states, evaluated exactly from the Haar purity moments
$\overline P_k$, $\sigma_k^2$ for representative bipartition sizes
$k=2,3,5,8,12,18$. \textbf{(a)} The lower bound
$1-4(\sigma_1^2+\sigma_k^2)/(\overline P_1-\overline P_k)^2$ on
$\Pr[P_{\gamma_1}>P_{\gamma_k}]$ [Eq.~\eqref{eq:puritydom}] rises sharply
from small values at $N\lesssim5$ to unity by $N\approx10$, confirming that the single-qubit purity exceeds that of any larger bipartition with probability approaching $1$ already at modest system sizes. \textbf{(b)} The corresponding failure
bound $4(\sigma_1^2+\sigma_k^2)/(\overline P_1-\overline P_k)^2$ on a logarithmic scale, decays exponentially in $N$ and closely tracks the
$\propto2^{-2N}$ reference (dotted), consistent with the per-pair rate established in Lemma~\ref{lem:chebyshev}.}
\label{fig:puritybound}
    \label{fig:purity_dom}
\end{figure}


\section{5. Reduction to the most unbalanced cut and expansion of the GMC}
\label{subsec:3C}
We established in Proposition~1 that for Haar-random pure
states, the minimization over bipartitions in the definition of $C_g$ is
achieved, with probability $1-\varepsilon_N=1-O(2^{-N})$, by a single-qubit
bipartition. It therefore suffices to consider a fixed cut
$\gamma_1=(A^*,\bar A^*)$ with $|\gamma_1|=1$. Let
\begin{equation}
P_{\gamma_1} := \Tr\rho_{A^*}^2
\end{equation}
denote the associated subsystem purity. The GMC then reduces to
\begin{equation}
C_g = \sqrt{2\bigl(1 - P_{\gamma_1}\bigr)},
\label{eq:g-def}
\end{equation}
and its statistical properties are determined entirely by the distribution
of $P_{\gamma_1}$ under the Haar measure $\mu_{\mathrm H}$.

\paragraph*{Expected value of the GMC.}
To evaluate $\mathbb E[C_g]$, we expand the square root around the mean
purity $\overline P_1 := \mathbb E[P_{\gamma_1}]$. Writing
\begin{equation}
P_{\gamma_1} = \overline P_1 + \delta,
\end{equation}
where $\delta = P_{\gamma_1}-\overline P_1$ is a zero-mean fluctuation, we
expand
\begin{equation}
C_g = \sqrt{2(1 - \overline P_1 - \delta)}
= \sqrt{2(1-\overline P_1)}\left(1 - \frac{\delta}{2(1-\overline P_1)}\right)^{1/2}.
\end{equation}
Expanding to second order in $\delta$ and taking expectations, using
$\mathbb E[\delta]=0$ and $\mathbb E[\delta^2]=\sigma_1^2$, gives
\begin{equation}
\mathbb{E}[C_g]
\approx \sqrt{2(1-\overline P_1)}
- \frac{\sigma_1^2}{4\sqrt{2}(1-\overline P_1)^{3/2}}.
\label{Eq:expected Cg}
\end{equation}
The first term is the leading contribution, determined by the mean purity alone, while the second term captures the correction due to fluctuations. Since $\sigma_1^2=O(2^{-2N})$ (established in the proof of Proposition~1, this correction is exponentially suppressed in system size.

\paragraph*{Asymptotic behavior.}
For the single-qubit bipartition, the mean purity is
\begin{equation}
\overline P_1 = \frac{2 + 2^{N-1}}{2^N + 1} \xrightarrow{N \to \infty} \frac{1}{2}.
\end{equation}
Substituting into the leading term of Eq.~\eqref{Eq:expected Cg}, $\mathbb{E}[C_g] \to 1$ as $N \to \infty$, so the average GMC approaches its maximum possible value of unity in the thermodynamic limit.

\paragraph*{Finite-$N$ behavior and rate of convergence.}
For finite $N$, Eq.~\eqref{Eq:expected Cg} provides a quantitatively accurate approximation to $\mathbb E[C_g]$. To assess how quickly the asymptotic value is approached, we note that the leading correction to
$\overline P_1$ from its asymptotic value of $1/2$ is
\begin{equation}
\overline P_1 - \frac{1}{2}
= \frac{2 + 2^{N-1}}{2^N + 1} - \frac{1}{2}
= \frac{3}{2(2^N+1)}
\approx \frac{3}{2^{N+1}},
\end{equation}
which decays exponentially in $N$. The leading term of $\mathbb E[C_g]$
therefore approaches unity exponentially fast in $N$.

\paragraph*{Higher-order corrections.}
For completeness, we note that higher-order terms in the expansion of
$\mathbb E[C_g]$ involve central moments
\begin{equation}
\kappa_r = \mathbb{E}\big[(P_{\gamma_1} - \overline P_1)^r\big], \quad r \geq 3.
\end{equation}
These are suppressed even faster, following from the joint eigenvalue
distribution and the moment formulae of Giraud~\cite{Giraud2007}. They
contribute negligibly for any $N$ large enough that the second-order
correction is already small, and we do not evaluate them explicitly here.

\section{6. Local Random Circuits}

\begin{figure*}
    \centering
    \includegraphics[width=\linewidth]{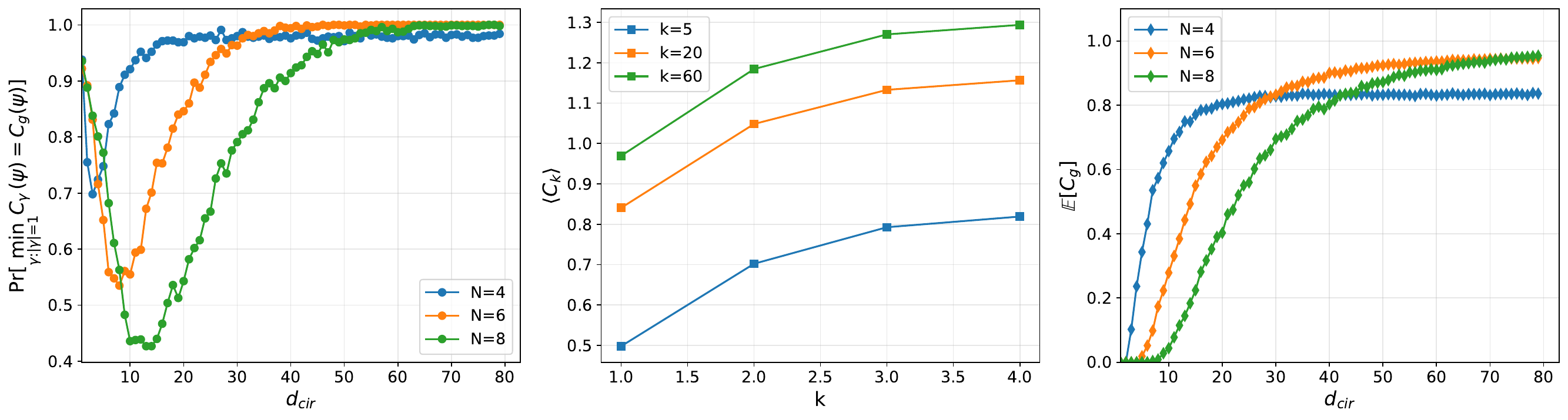}
    \caption{Behavior of the concurrence-based GME under local random circuits. 
(a) Probability that the minimization defining $C_g$ is achieved by the most unbalanced bipartition ($k=1$) as a function of the number of gates. 
(b) Bipartite concurrences as a function of subsystem size $k$ for different numbers of gates, showing a persistent ordering in which the most unbalanced bipartition yields the smallest value. 
(c) Average $\mathbb{E}[C_g]$ as a function of the number of gates for different system sizes $N$. 
All quantities are averaged over circuit realizations.
}
    \label{fig:cgmevsdepth}
\end{figure*}

Figure~2(a): The probability that the minimization defining $C_g$ is attained by the most unbalanced bipartition ($|\gamma|=1$) increases rapidly with $d_{\rm cir}$ and converges to unity for all system sizes considered. At small depths several cuts can compete due to the near absence of GME, but this ambiguity disappears once scrambling develops. The minimization then becomes overwhelmingly dominated by the single-qubit cut.

The origin of this behavior is shown in Fig.~2(b), where the average concurrence associated with each bipartition size $k$ is plotted for $N=8$ and representative depths $d_{\rm cir}\in\{5,20,60\}$. Throughout the evolution, more balanced bipartitions exhibit larger average concurrence than more unbalanced ones, in agreement with the ordering of mean purities. Although all concurrences increase with depth, the ordering remains intact, with the $|\gamma|=1$ bipartition consistently yielding the smallest concurrence. Consequently, the GMC minimization collapses onto the most unbalanced cut.

Figure~2(c) shows the corresponding evolution of $\mathbb{E}[C_g]$. The GMC grows from zero, rapidly approaches the Haar-typical value, and subsequently saturates. Once the $|\gamma|=1$ cut dominates the minimization, further growth of multipartite correlations at larger scales has little influence on the value of $C_g$.

\end{document}